
\documentclass[11pt,a4paper,reqno]{amsproc}

\makeatletter
\usepackage{datetime}
\usepackage[pdfborder={0 0 0}]{hyperref}
\usepackage{upref,cases}
\hypersetup{citebordercolor=.1 .6 1}

%\usepackage[T1]{fontenc}
%\usepackage[utf8]{inputenc}

%\usepackage[final]{graphicx}
%%\usepackage{natbib}
%\usepackage[backend=bibtex,style=alphabetic]{biblatex}
%\bibliography{bibliography2}

%\usepackage{showlabels}
%\usepackage{fourier}

\usepackage{geometry}
\geometry{a4paper,left=36mm,right=36mm, top=4cm, bottom=3cm}

\IfFileExists{mymtpro2.sty}{\usepackage[subscriptcorrection]{mymtpro2}

  \def\cap{\capprod}
  \def\bigcup{\bigcupprod}
  
  \def\bigcupdisjoint{\mathop{\kern10pt\raisebox{4pt}{$\cdot$}\kern-12pt\bigcup}\limits}
}%
	{\usepackage{amssymb,bm,dsfont}

}

\numberwithin{equation}{section}
\parskip0.5ex
\addtolength{\textheight}{-4ex}

%%%%%%%%%%% remarks in the margin
%

\marginparwidth1.5cm
\marginparsep.5cm

%%%%%%%%%%% provide more space before and after theorem-like structures
%
\newtheoremstyle{ttheorem}%
       {1.8ex\@plus1ex}                % space above
       {2.1ex\@plus1ex\@minus.5ex}      % space below
       {\itshape}           % body font
       {0pt}                   % indent amount
       {\bfseries}          % Theoremhead font
       {.}                  % Punctuation after theorem head
       {.5em}               % Space after theorem head
       {}                % Theorem head spec (can be left empty: `normal')

\newtheoremstyle{ddefinition}%
       {1.8ex\@plus1ex}                % space above
       {2.1ex\@plus1ex\@minus.5ex}      % space belowi
       {}           % body font
       {0pt}                   % indent amount
       {\bfseries}           % Theoremhead font
       {.}                  % Punctuation after theorem head
       {.5em}               % Space after theorem head
       {}                % Theorem head spec (can be left empty: `normal')

\newtheoremstyle{rremark}%
       {1.8ex\@plus1ex}                % space above
       {2.1ex\@plus1ex\@minus.5ex}      % space belowi
       {\normalfont}        % body font
       {0pt}                   % indent amount
       {\bfseries}           % Theoremhead font
       {.}                  % Punctuation after theorem head
       {.5em}               % Space after theorem head
       {}                   % Theorem head spec (can be left empty: `normal')

\theoremstyle{ttheorem}
\newtheorem{theorem}{Theorem}[section]
\newtheorem{lemma}[theorem]{Lemma}

\newtheorem{corollary}[theorem]{Corollary}

\theoremstyle{ddefinition}
\newtheorem{definition}[theorem]{Definition}

\theoremstyle{rremark}
\newtheorem{remark}[theorem]{Remark}
\newtheorem{myremarks}[theorem]{Remarks}
\newtheorem{myexamples}[theorem]{Examples}

%%%%%%%%%%%%%%%%%%%%%%%%%%%%%%%%%%%%%%%%%%%%%%%%%%%%%%%%%%%%%%%%%%%%%%%%%%
% redefine remarks environment as to allow for automatic
% referencing to individual items

\newenvironment{remarks}{\begin{myremarks}\begin{nummer}}%
    {\end{nummer}\end{myremarks}}
    {\end{nummer}\end{myexamples}}

\newcounter{numcount}
\newcommand{\labelnummer}{(\roman{numcount})}%

\makeatletter
\providecommand{\showkeyslabelformat}[1]{\relax}        %	 for compatibility with package showkeys
\let\mysaveformat\showkeyslabelformat                   %
\def\myformat#1{\raisebox{-1.5ex}{\mysaveformat{#1}}}   %

\newenvironment{nummer}%
  {\let\curlabelspeicher\@currentlabel%
    \begin{list}{\textup{\labelnummer}}%
      {\usecounter{numcount}\leftmargin0pt%
        \topsep0.5ex\partopsep2ex\parsep0pt\itemsep0ex\@plus1\p@%
        \labelwidth2.5em\itemindent3.5em\labelsep1em%
      }%
    \let\saveitem\item%
    \def\item{\saveitem%
      \def\@currentlabel{\curlabelspeicher\kern.1em\labelnummer}}%
    \let\savelabel\label%
    \def\label##1{{\ifnum\thenumcount=1\let\showkeyslabelformat\myformat\fi\savelabel{##1}}%
										{\def\@currentlabel{\labelnummer}%
									 	\let\showkeyslabelformat\@gobble%	 for compatibility with package showkeys
									 	\savelabel{##1item}%
										}%
	   							}%
  }{\end{list}}%

  {\let\curlabelspeicher\@currentlabel%
    \begin{list}{\textup{\labelnummer}}%
      {\usecounter{numcount}\leftmargin0pt%
        \topsep0.5ex\partopsep2ex\parsep0pt\itemsep0ex\@plus1\p@%
        \labelwidth2.5em\itemindent0em\labelsep1em%
        \leftmargin2.5em}%
    \let\saveitem\item%
    \def\item{\saveitem%
      \def\@currentlabel{\curlabelspeicher\kern.1em\labelnummer}}%
    \let\savelabel\label%
    \def\label##1{{\ifnum\thenumcount=1\let\showkeyslabelformat\myformat\fi\savelabel{##1}}%
										{\def\@currentlabel{\labelnummer}%
									 	\let\showkeyslabelformat\@gobble%	 	 for compatibility with package showkeys
									 	\savelabel{##1item}%
										}%
    							}%
  }{\end{list}}%

%%%%%%%%%%%%%%%%%%%%%%%%%%%%%%%%%%%%%%%%%%%%%%%%%%%%%%%%%%%%%%%%%%%%
% provide more space before section headings

\def\section{\@startsection{section}{1}%
  \z@{1.3\linespacing\@plus\linespacing}{.5\linespacing}%
  {\normalfont\bfseries\centering}}

\def\subsection{\@startsection{subsection}{2}%
  \z@{.8\linespacing\@plus.5\linespacing}{-1em}%
  {\normalfont\bfseries}}

\def\nlsubsection{\@startsection{subsection}{2}%
  \z@{.8\linespacing\@plus.5\linespacing}{.1ex}%
  {\normalfont\bfseries}}

\let\@afterindenttrue\@afterindentfalse%

%%%%%%%%%%%%%%%%%%%%%%%%%%%%%%%%%%%%%%%%%%%%%%%%%%%%%%%%%%%%%%%%%%%%
% no indentation in proof environment

\renewenvironment{proof}[1][\proofname]{\par \normalfont
  \topsep6\p@\@plus6\p@ \trivlist %\itemindent\normalparindent
  \item[\hskip\labelsep\scshape
    #1\@addpunct{.}]\ignorespaces
}{%
  \qed\endtrivlist
}

%%%%%%%%%%%%%%%%%%%%%%%%%%%%%%%%%%%%%%%%%%%%%%%%%%%%%%%%%%%%%%%%%%%%
% file information on page 1

\def\ps@firstpage{\ps@plain
  \def\@oddfoot{\normalfont\scriptsize \hfil\thepage\hfil
     \global\topskip\normaltopskip}%
  \let\@evenfoot\@oddfoot
  \def\@oddhead{
    \begin{minipage}{\textwidth}
      \normalfont\scriptsize
      \emph{\insertfirsthead}
    \end{minipage}}
  \let\@evenhead\@oddhead % in case an article starts on a left-hand page
}

\def\insertfirsthead{}

%%%%%%%%%%%%%%%%%%%%%%%%%%%%%%%%%%%%%%%%%%%%%%%%%%%%%%%%%%%%%%%%%%%%%%
% citations in upright and normalweight

\def\@cite#1#2{{%
 \m@th\upshape\mdseries[{#1}{\if@tempswa, #2\fi}]}}
\addtolength{\headsep}{2ex}
\addtolength{\footskip}{5ex}

%%%%%%%%%%%%%%%%%%%%%%%%%%%%%%%%%%%%%%%%%%%%%%%%%%%%%%%%%%%%%%%%%%%%%%%%%%
%
% ||       Macros Peter
% \/

\renewcommand{\H}{\mathcal{H}}

\newcommand{\C}{\mathbb{C}}
\newcommand{\N}{\mathbb{N}}

\newcommand{\R}{\mathbb{R}}

\renewcommand{\le}{\leqslant}
\renewcommand{\ge}{\geqslant}

\DeclareMathOperator{\tr}{tr}

\providecommand{\wtilde}[1]{\widetilde{#1}}

\providecommand{\bigcupdisjoint}{\mathop{\kern7pt\raisebox{6pt}{$\cdot$}\kern-9.5pt\bigcup}\limits}

%%%%%%%%%%%%%%%%%%%%%%%%%%%%%%%%%%%%%%%%%%%%%%%%%%%%%%%%%%%%%%%%%%%%%%%%%%
%
% ||       Macros Heiner
% \/

\let\<\langle
\let\>\rangle

\providecommand{\norm}[1]{\lVert#1\rVert}
% von mir ausgedacht :). Oder Fall (1) mit \bigg? Vgl. amsldoc, PDF-Seite 20

% This one with optional argument for indices. Extra { } doesn't work
% since the contents are potentially bigger.

% Hand selection. (Experimental)

\providecommand{\bignorm}[1]{\bigl\lVert#1\bigr\rVert}
\providecommand{\Bignorm}[1]{\Bigl\lVert#1\Bigr\rVert}

% The same for parentheses and brackets

\providecommand{\Bigparens}[1]{\Bigl(#1\Bigr)}

% Small and big O
\newcommand{\Oh}{\mathrm{O}}
\newcommand{\oh}{\mathrm{o}}

           % Hier-Einsetzen-Identität

      % Borel sets

\newcommand{\1}{1}

\newcommand{\upd}{\mathrm{d}}
\renewcommand{\d}{\upd}   % Physiker-Version

% Ghosts of departed quantities

               % Spectrum
            % Resolvent set

          % Support

          % Lineare Hülle
          % Range
\newcommand{\hairspace}{\kern .04167em}
        % Gradient

            % Polylogarithm

%\DeclareInputText{167}{\sectionsymbol} % for latin1
\renewcommand{\S}{\mathcal{S}}

% A complement to \smash, \llap, and \rlap. TUGboat 22 (2001), 350--352.
% See http://math.arizona.edu/~aprl/publications/mathclap/
\def\clap#1{\hbox to 0pt{\hss#1\hss}}

% employed sparingly
% \bra{x} = |x>, \ket{x} = <x|
% \bar*{x} = x>, \ket*{x} = <x
\def\bra{\makeatletter\@ifstar\@bra\@@bra}
\def\@bra#1{\hairspace #1\>}
\def\@@bra#1{\lvert\@bra{#1}}
\def\ket{\makeatletter\@ifstar\@ket\@@ket}
\def\@ket#1{\<#1\hairspace}
\def\@@ket#1{\@ket{#1}\rvert}

\setcounter{tocdepth}{3}

\makeatother
% /\
% ||
%
%%%%%%%%%%%%%%%%%%%%%%%%%%%%%%%%%%%%%%%%%%%%%%%%%%%%%%%%%%%%%%%%%%

\sloppy

\raggedbottom

\frenchspacing

\begin{document}

%\firsthead{Version of \today\ generated from file \jobname\ at \currenttime}

\title[Anderson's Orthogonality Catastrophe for Dirac-$\delta$]
{The asymptotics of an eigenfunction-correlation determinant for Dirac-$\delta$ perturbations \\
(Anderson's Orthogonality Catastrophe for Dirac-$\delta$)
}

\author[M.\ Gebert]{Martin Gebert}

\address{Mathematisches Institut,
  Ludwig-Maximilians-Universit\"at M\"unchen,
  Theresienstra\ss{e} 39,
  80333 M\"unchen, Germany}

\email{gebert@math.lmu.de}

\thanks{Work supported by SFB/TR 12 of the German Research Council
(DFG)}

%\maketitle

\begin{abstract}
We give a proof of the exact asymptotic behaviour in Anderson's Orthogonality Catastrophe for Dirac-$\delta$ perturbations. 
 We provide the asymptotics
 of the scalar product of the ground states  
 of two non-interacting Fermi gases confined to a $3$-dimensional ball $B_L$
 of radius $L$ 
 in the thermodynamic limit, where the underlying
 one-particle operators differ by 
 a Dirac-$\delta$ perturbation. More precisely,
 we show the algebraic decay of the correlation determinant
 $\big|\det\big(\langle\varphi_j^L, \psi_k^L\rangle\big)_{j,k=1,...,N}\big|^2= L^{-\zeta(E)+ \text{o}(1)}$, 
as $N,L\to\infty$ and $N/|B_L|\to~ \rho>0$, where
$\varphi_j^L$ and $\psi_k^L$ denote the lowest-energy eigenfunctions of the finite-volume one-particle
Schr\"odinger operators.
The decay exponent is given in terms of the s-wave scattering phase shift 
 $\zeta(E):=\frac 1 {\pi^2}\delta^2(\sqrt E)$. 
 For an attractive Dirac-$\delta$ perturbation 
 we conclude that the decay exponent $ \frac 1 {\pi^2}\Vert\arcsin |T(E)/2|\Vert^2_{\text{HS}}$ 
 found in \cite{GKM2} does not provide a sharp upper 
 bound on the decay of the correlation determinant.
\end{abstract}

\maketitle

\section{Introduction}

We consider the asymptotics of the scalar product of the ground states of two non-interacting finite-volume
$N$-particle Schr\"odinger operators in the thermodynamic limit approaching the
particle density $\rho(E)>0$ corresponding to the Fermi energy $E>0$.
Here, the underlying one-particle Schr\"odinger operators are the negative Laplacian 
in $3$-dimensional Euclidean space and the negative Laplacian with a Dirac-$\delta$ or zero-range 
perturbation located at the origin. 
We restrict this pair to the ball $B_L(0)$
of radius $L$ and are interested in the $L$-asymptotics of the scalar product 
of the ground states of the corresponding two non-interacting $N$-particle operators,
which we call the ground-state overlap in the sequel.
Using the representation of the ground states as Slater determinants,
we see that
the ground-state overlap is the following correlation determinant
\begin{equation}\label{intro2}
 \S^N_L:=\det\Bigparens{\big\<\varphi_j^L, \psi_k^L\big\>}_{1\le j,k\le N}.
\end{equation}
In this note, we are interested in its thermodynamic limit, i.e. 
increasing $L$ and $N\in\N$ simultaneously such that
$N/|B_L(0)|\to\rho(E)>0$, where $\rho(E)$ denotes the integrated density of states of the negative 
Laplacian at the energy $E>0$.
Here, $\varphi_j^L$ and $\psi_k^L$ are the normalized eigenfunctions belonging to the $N$ lowest eigenvalues 
of the restricted operators, 
which we call $H_L$ and $H_{\alpha,L}$, and $\<\cdot,\cdot\>$ denotes the scalar product in $L^2(B_L(0))$. 
Since we restrict the operators to the ball $B_L(0)$, the operators $H_L$ and $H_{\alpha,L}$ admit a decomposition with respect to angular momentum. The operators differ in this decomposition in the lowest angular momentum channel only and we choose the same eigenfunctions of $H_L$ and $H_{\alpha,L}$ in all angular momentum channels $\ell\ge 1$. Thus, the problem reduces effectively to a problem on the half axis. 
Anderson claimed in \cite{PhysRev.164.352} that in the case of a Dirac-$\delta$-perturbation the determinant
admits the asymptotics
\begin{equation}\label{intro}
 \big|\S^N_L \big|^2\sim L^{-\zeta(E)}
\end{equation}
as $N,L\to\infty$, $N/|B_L(0)|\to\rho(E)>0$, 
where 
\begin{equation}
\zeta(E):=\frac{1}{\pi^2}\delta^2(\sqrt E)
\end{equation}
and $\delta$ refers to the s-wave scattering phase shift.
This algebraic decay of the ground-state overlap is called
Anderson's orthogonality catastrophe in the physics literature  
and we refer to \cite{GKM} for further references. 

The starting point of the proofs of previous rigorous results is the following expansion of the determinant
\begin{equation}\label{11111}
 \ln \big|\S_L^N \big|^2= -\sum_{n=1}^\infty 
 \frac 1 n \tr\left\{ \left( \1_{(-\infty,\lambda_N^L]}(H_L)\1_{[\mu_{N+1}^L,\infty)}(H_{\alpha,L})\right)^n\right\},
\end{equation}
valid for appropriate choices of $N$,
where $\lambda_N^L$ and $\mu_{N+1}^L$ denote the $N$th and $(N+~1)$th eigenvalue of
the finite-volume operators $H_L$ and $H_{\alpha,L}$, see \cite{GKM2}. 
Thus, estimates on the correlation determinant $S_L^N$ are closely 
related to asymptotics of products of spectral projections given in \eqref{11111}.
Considering only the $n=1$ term in \eqref{11111}, 
the first rigorous bounds on $S^N_L$ were proved in \cite{KuOtSp13} valid for one-dimensional systems and
short-range perturbations. 
They found the upper bound $|\S_L^N|^2\lesssim L^{-\wtilde \gamma}$
with the decay exponent $\wtilde{\gamma}(E):=\frac 1 {\pi^2}\norm{T(E)/2}^2_{\text{HS}}$, where
$T$ refers to the scattering $T$-matrix of the corresponding infinite-volume operators, 
and a non-optimal lower bound.
Later in \cite{GKM} the same upper bound $\wtilde{\gamma}(E):=\frac 1 {\pi^2}\norm{T(E)/2}^2_{\text{HS}}$ 
was deduced for quite general pairs of Schr\"odinger operators in arbitrary dimension, which differ by a sign-definite
potential.
Taking all summands in \eqref{11111} into account, \cite{GKM2} proved an upper bound
with the decay exponent 
\begin{equation}
\gamma(E):=\frac 1 {\pi^2} \norm{\arcsin |T(E)/2|}^2_{\text{HS}}
\end{equation}
in the general setting discussed in \cite{GKM}. 
Let us point out that these previous results concern upper 
bounds and are also valid for special choices of thermodynamic limits only.

Here, in the toy model of a Dirac-$\delta$ perturbation we provide the exact asymptotics
of the correlation determinant and we consider arbitrary thermodynamic limits 
approaching a particle density $\rho>0$, see Theorem \ref{main:thm:3d} below.
We show this using a representation of the ground-state overlap other than \eqref{11111}, which is
valid for rank-1-perturbations, i.e.
\begin{equation}
    \left|  \S_L^N\right|^2
    =
    \prod_{j=1}^N\prod_{k=N+1}^\infty\frac{\big|\mu^L_k-\lambda^L_j\big|\big|\lambda^L_k-\mu^L_j\big|}
    {\big|\lambda^L_k-\lambda^L_j\big|\big|\mu^L_k-\mu^L_j\big|},
 \end{equation}
 where $\lambda_k^L$ and $\mu_j^L$ are the eigenvalues of the pair of the finite-volume Schr\"odinger
 operators, see Section \ref{sec:prod}. This formula is known in physics literature and
 goes back at least to \cite{Tanabe}.
Using the latter formula, we give a straightforward proof of the algebraic decay \eqref{intro} with
the exponent $\zeta(E)=\frac{1}{\pi^2}\delta^2(\sqrt E)$, as Anderson predicted.
It turns out that the decay exponent is equal to the one found in \cite{GKM2} 
in the case of 
a repulsive Dirac-$\delta$ perturbation 
only, i.e. $\zeta(E) = \gamma(E)$.
On the other hand, we obtain 
$\zeta(E)> \gamma(E)$ 
for an attractive Dirac-$\delta$
see Remark \ref{1.1} below.
Hence, the decay exponent $\gamma(E)$ does not provide the exact asymptotics of \eqref{intro2}.
We conjecture that the exponent $\gamma(E)$ gives the correct decay exponent only whenever the appropriately continuously normalized phase shifts do not exceed $|\pi/2|$.

Recently, \cite{magnetic} proved the asymptotics of a shifted correlation determinant
for one-dimensional models with a perturbation by a magnetic field.
A related problem, which we mention for completeness, is considering the asymptotics of products of spectral 
projections of infinite-volume operators, similar to \eqref{11111}. 
This was done in the proof of \cite{GKM2} and extended in \cite{FrankPush}.

\section{Model and results}

We start with the operator $-\Delta_0:~C_c^\infty\big(\R^3\backslash \{0\}\big)\to~ L^2(\R^3)$,
which has deficiency indices $(1,1)$. Therefore, $-\Delta_0$ gives rise to a one-parameter
family of self-adjoint extensions which we index by $\alpha\in\R$ and denote by $-\Delta_\alpha$, 
see \cite[Chapter 1]{zbMATH02132167}.
We refer to 
$-\Delta_{\alpha}$ as
the negative Laplacian with a Dirac-$\delta$ perturbation sitting at the origin $0$ of strength $\alpha$.
Throughout, we consider for $\alpha\in\R$ the pair of Schr\"odinger operators 
\begin{equation}
 H:=-\Delta\qquad \text{and} \qquad H_{\alpha}:= -\Delta_{\alpha}
\end{equation}
on the Hilbert space $\H=L^2(\R^3)$,
where $-\Delta$ is the negative Laplacian.
More precisely, following \cite[Chapter 1]{zbMATH02132167}, the operators $H$ and $H_\alpha$ admit a decomposition with
respect to angular momentum. Thus, there exists a unitary $U$ such that
both operators transform into the direct sum 
\begin{equation}\label{3-d:eq1}
 UHU^*=\bigoplus_{\ell\in\N_0}\bigoplus_{-\ell\le m \le \ell} h^\ell
 \quad \text{and} \quad
 UH_\alpha U^* = \bigoplus_{\ell\in\N_0}\bigoplus_{-\ell\le m \le \ell} h_\alpha^\ell,
 \end{equation}
where $h^\ell_{(\alpha)}: L^2((0,\infty))\supset \text{dom}(h^\ell_{(\alpha)})\to L^2((0,\infty))$ and
$h^\ell=h_\alpha^\ell$ for all $\ell\ge 1$. In the $\ell=0$ case the operators are given by
\begin{align}
h^0=-\frac{\d^2}{\d x^2},\quad
\text{dom}(h^0)=  \big\{& f\in L^2((0,\infty)):\, f,f'\in AC_{\text{loc}}((0,\infty)); \\
			  & f(0+) =0;\, f''\in L^2((0,\infty))\big\}\nonumber
\end{align}
\begin{align}
 \ \ h^0_\alpha=-\frac{\d^2}{\d x^2},\quad 
 \text{dom}(h^0_\alpha)=\big\{& f\in L^2((0,\infty)):\, f,f'\in AC_{\text{loc}}((0,\infty)); \\
			  &-4\pi\alpha f(0+)+f'(0+)=0;\, f''\in L^2((0,\infty))\big\},\nonumber%\subset L^2((0,\infty)).
\end{align}
where we denote by $AC_{\text{loc}}((0,\infty))$ the set of all locally absolutely continuous
functions.  %and by
Thus, the difference of $H$ and $H_\alpha$ takes place in the lowest angular momentum 
channel via a different boundary condition at $0$ which we parametrise by $\alpha\in\R$. 
In the following we are interested in 
the restrictions of these operators to the ball $B_L(0)$ of radius $L$ around 
the origin
\begin{equation}
 H_L:= -\Delta_L\qquad \text{and} \qquad H_{\alpha,L}:= -\Delta_{\alpha,L}.
\end{equation}
Here, $-\Delta_L$ is the negative Dirichlet Laplacian on $B_L(0)$.
The operator $-\Delta_{\alpha,L}$ corresponds to the restriction of the operator $-\Delta_\alpha$ 
 imposing Dirichlet boundary condition at $L$ in each angular momentum channel, i.e.
 also the restriction of $-\Delta_\alpha$ to $B_L(0)$ with Dirichlet boundary conditions.
Thus, $H_L$ and $H_{\alpha,L}$ differ as well as before in the lowest angular momentum channel
only by a different boundary condition at $0$. 
We call the corresponding operators in the $\ell=0$ channel, i.e the restrictions of
$h^0$ and $h^0_\alpha$ to the interval $(0,L)$ with Dirichlet boundary condition at $L$,
\begin{equation}
 h^0_L\qquad \text{and} \qquad h^0_{\alpha,L}.
\end{equation}
Using standard results for regular Sturm-Liouville operators, 
 we obtain for all $z\in \varrho(h^0_L)\cap \varrho(h_{\alpha,L}^0)$ a vector 
$\eta_{L,z}^\alpha\in\ L^2(B_L(0))$ such that the resolvents satisfy
\begin{equation}\label{delta:lemma:rank-1} 
 \frac 1 {h^0_L-z} - \frac 1 {h^0_{\alpha,L}-z} = \big|\eta_{L,z}^\alpha\big\>\big\<\eta_{L,z}^\alpha\big|.
\end{equation}
Throughout, we write $\varrho(A)$ for the resolvent set of an operator $A$. 
Thus, $h^0_{\alpha,L}$ is a rank-1-perturbation of $h^0_L$ in the resolvent, and the same is true for
the pair
$H_{\alpha,L}$ and $H_L$.  
We point out that the perturbation is not compactly supported since $\eta_{L,z}^\alpha$ is $L$ dependent.
Moreover, the compactness of the resolvents
of $H_L$ and $H_{\alpha,L}$ imply that both $H_L$ and $H_{\alpha,L}$ have discrete spectra.
We write 
\begin{equation}
\lambda_1^L\le\lambda_2^L\le\dotsb\quad \text{and} \quad
\mu_1^L\le\mu_2^L\le\dotsb
\end{equation}
for their non-decreasing sequences of eigenvalues, counting multiplicities, and $(\varphi_j^L)_{j\in\N}$ and
$(\psi_k^L)_{k\in\N}$ for the corresponding
sequences of normalized eigenfunctions, where we choose the same eigenvectors 
for $H_L$ and $H_{\alpha,L}$
in any angular momentum 
channel $\ell\ge 1$. 
This choice ensures that the eigenfunctions of $H_L$ and $H_{\alpha.L}$ differ in the lowest angular
momentum channel only and our problem reduces to a problem on the half axis. 
Let us point out that in the case of $\alpha<0$ there exists precisely one negative eigenvalue $\mu_1=-(4\pi\alpha)^2$
for the infinite-volume operator $H_\alpha$, respectively $h^0_{\alpha}$, see  \cite[Chapter 1]{zbMATH02132167}. 
Dirichlet-Neumann bracketing implies  
$h^0_{\alpha} \le h^0_{\alpha.L}\oplus h^0_{L^c}$, where $h^0_{L^c}$ denotes the negative Laplacian on
$(L,\infty)$ with Dirichlet boundary condition at $L$. Thus, in the case of $\alpha<0$
we obtain the uniform lower bound on the finite-volume operators
\begin{equation}\label{delta:unif:below}
 H_{\alpha,L}\ge -(4\pi\alpha)^2\quad\text{and equivalently}\quad  h^0_{\alpha,L}\ge -(4\pi\alpha)^2.
\end{equation}
Let $N\in\N$.
In the following we are interested in the correlation determinant
\begin{equation}
	\S^N_L := %\boldsymbol{\Big\<}\Phi_L^{N_L(E)}, \Psi_L^{N_L(E)}\boldsymbol{\Big\>}_{N_L{(E)}}
    \det\Bigparens{\big\<\varphi_j^L, \psi_k^L\big\>}_{1\le j,k\le N}.
  \label{def:overlap}
\end{equation}
The main result concerning $\S^N_L$ is the following.

\begin{theorem}\label{main:thm:3d}
Let $\alpha\in\R$, $E>0$ and 
$N_{(\,\cdot\,)}(E):\R_+\to\N$ an arbitrary function subject to
\begin{equation}\label{XYZ}
\lim_{L\to\infty}  \frac{N_L(E)}{|B_L(0)|}= \rho(E):= \frac{E^{3/2}}{8\pi^3},
\end{equation}
where
$\rho$ denotes the integrated density of states of the operator $-\Delta$.
Then, the correlation determinant corresponding
to the pair $H_L$ and $H_{\alpha,L}$ admits the asymptotics
		\begin{equation}\label{111112}
		 \big|\S^{N_L(E)}_L\big|^2 = L^{-\frac 1 {\pi^2}\delta_{\alpha}^2(\sqrt E)+ \oh(1)},
               \end{equation}
as $L\to\infty$,
equivalently,
\begin{equation}
 \lim_{L\to\infty} \frac{\ln\big|\S^{N_L(E)}_L\big|^2}{\ln L} = -\frac 1 {\pi^2}\delta_{\alpha}^2(\sqrt E),
\end{equation}
and $\delta_{\alpha}$ is given by Definition \ref{delta:def:phase} below. 
\end{theorem}

\begin{definition}[Scattering phase shift]\label{delta:def:phase}
 Let $k>0$. Then,  the scattering phase shift is defined by
 \begin{align}
 \delta_\alpha(k):=
  \begin{cases}
    \arctan\left(\frac k {4\pi \alpha}\right)\ \qquad\ \ & \text{for}\ \alpha\ge 0\\
    %\displaystyle \frac \pi 2 \ \qquad\ \ & \text{for}\ \alpha = 0,\\
    \pi - \arctan\big(\frac k {4\pi |\alpha|}\big)\ \ & \text{for}\ \alpha\le 0,
  \end{cases}
 \end{align}
 where we use the convention $\arctan\big(\frac k 0\big):=\frac\pi 2$ for $k>0$.
\end{definition}

\begin{remark}\label{1.0}
 We choose the same eigenfunctions in the $l\ge 1$ angular momentum channels because we are considering an s-wave scattering problem. In principle, this choice is only necessary if $\lambda_N^L$ is degenerate and $\S^N_L$ takes only a proper subset of the eigenfunctions in the $\lambda_N^L$ eigenspace into account. 
\end{remark}

\begin{remarks}\label{1.1}
\item 
 The separate definitions of the phase shift
 are reminiscent to the existence of a negative eigenvalue whenever $\alpha<0$
 and Levinson's theorem. 
 \item
 Due to the nature of a Dirac-$\delta$ perturbation in $3$ dimensions the same result is
 apparently valid for the corresponding problem on the half-axis. 
 \item
 We emphasise that we allow arbitrary thermodynamic limits approaching the particle density $\rho>0$. 
 In other words, given a Fermi energy $E>0$ we allow arbitrary choices of the particle number satisfying equation
 \eqref{XYZ}.
 \item
 The $\oh(1)$-error in \eqref{111112} depends on the particular choice of the thermodynamic limit.
 To see this, we refer to equations \eqref{1.} and \eqref{2.} in the proof of Theorem \ref{main:thm:3d}. 
 In particular, we think that the error cannot be improved allowing arbitrary thermodynamic limits.
 \item
\cite{Affleck199735} and references therein suggest that there is a connection between the exact decay exponent in Anderson's orthogonality and the so-called finite-size energy for systems on the half axis. This is the $\Oh(1/L)$ correction in the difference of the ground-state energies. It was shown in \cite{MG:delta} that this correction depends, in contrast to the above result, on the precise thermodynamic limit. Hence, we question a fundamental connection of these quantities. 
 \item
 In \cite[Theorem 2.2]{GKM2} an upper bound on the ground-state overlap is proved 
 for quite general pairs of Schr\"odinger operators which is valid
 for subsequences only. More precisely, they prove for a subsequence
 \begin{equation}
  \limsup_{L\to\infty}\frac {\ln\big|\S^{N_L(E)}_L\big|}{\ln L} \le - \frac{\gamma(E)}2,
 \end{equation}
 where 
 \begin{equation}
  \gamma(E):=\frac 1 {\pi^2} \norm{\arcsin|T(E)/2|}^2_{\text{HS}}
 \end{equation}
 and $T$ denotes the scattering $T$-matrix.
 Since we consider here s-wave scattering, 
 we restrict ourselves to the lowest angular momentum channel. In this case, $T(E)$ is a complex number and
 $|T(E)/2|=\sin(\delta_\alpha(\sqrt E))$. Now, computing $\gamma(E)$ yields
 \begin{align}
  \gamma(E)
  = 
  \begin{cases}
   \frac 1 {\pi^2}\delta_\alpha^2(\sqrt E) &\ \text{for}\ \delta_\alpha(\sqrt E)\le \frac\pi 2\\
   \frac 1 {\pi^2} \big(\delta_\alpha(\sqrt E)-\pi\big)^2 &\ \text{for}\ \delta_\alpha(\sqrt E)\ge \frac\pi 2.
  \end{cases}
 \end{align}
Thus, in general the decay exponent $\gamma(E)$
does not provide a sharp upper bound on the correlation determinant whenever the phase shift
is bigger than $ \pi/ 2$. In our model this is equivalent to $\alpha< 0$
which we refer to as the attractive case. We warn the reader that, generally, the modulus of the phase shift may be bigger than $|\pi/2|$ even in cases where the perturbation does not create a bound state. However, if the perturbation creates bound states, Levinson's theorem will immediately give a phase shift strictly bigger than $|\pi/2|$ for Fermi energies near $0$. 
\end{remarks}

The  proof of Theorem \ref{main:thm:3d} follows from a different approach 
than the one made in \cite{GKM} and \cite{GKM2}, i.e. we do not use the representation
\eqref{11111} in this article. Here, the key is the
following remarkable product representation of the determinant in terms of the
eigenvalues of the finite-volume Schr\"odinger operators. To our knowledge, this was first stated in \cite{Tanabe}.

\begin{lemma}
Let $N\in\N$. Then,
  \begin{equation}
    \Big|  \det\Bigparens{\big\<\varphi_j^L, \psi_k^L\big\>}_{1\le j,k\le N}\Big|^2
    =\,
    \prod_{j=1}^N\prod_{k=N+1}^\infty\frac{\big|\mu^L_k-\lambda^L_j\big|\big|\lambda^L_k-\mu^L_j\big|}
    {\big|\lambda^L_k-\lambda^L_j\big|\big|\mu^L_k-\mu^L_j\big|}.
 \end{equation}
\end{lemma}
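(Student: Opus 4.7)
The plan is to exploit the rank-one nature of the resolvent difference \eqref{delta:lemma:rank-1} in the lowest angular momentum channel and to reduce the computation to a Cauchy determinant. The argument has three parts: (i) an angular-momentum block decomposition, (ii) a rank-one expansion of the $\ell=0$ eigenfunctions of $h^0_{\alpha,L}$ in the eigenbasis of $h^0_L$, and (iii) the Cauchy determinant formula combined with rank-one residue identities.

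Concretely, by the decomposition \eqref{3-d:eq1} and the common choice of eigenvectors in every channel $\ell\ge 1$, the matrix $\bigparens{\big\<\varphi_j^L,\psi_k^L\big\>}_{1\le j,k\le N}$ is block-diagonal, and the blocks coming from channels $\ell\ge 1$ are identity blocks, each contributing the factor $1$ to $\abs{\SO^N_L}^2$. Moreover, in the product on the right-hand side, any factor in which $\lambda_j^L=\mu_j^L$ or $\lambda_k^L=\mu_k^L$ (i.e.\ an index belonging to a shared $\ell\ge 1$ eigenvalue) collapses to $1$ after cancellation between numerator and denominator. Hence both sides reduce to the analogous expression written in terms of the $\ell=0$ eigenvalues alone. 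If the partition of $\{1,\dots,N\}$ by angular momentum does not coincide for $H_L$ and $H_{\alpha,L}$, then a shared $\ell\ge 1$ eigenvalue lies above the Fermi level on one side and below on the other, producing a zero factor on the right and a block-rectangular zero determinant on the left.

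In the $\ell=0$ channel, writing $\lambda_j^0:=\lambda_j^{0,L}$, $\mu_k^0:=\mu_k^{0,L}$ and $N_0$ for the common number of $\ell=0$ eigenvalues among the first $N$, the rank-one structure of \eqref{delta:lemma:rank-1} gives an expansion of the form
\begin{equation}
\psi_k^{0,L} = c_k\sum_j\frac{v_j}{\lambda_j^0-\mu_k^0}\,\varphi_j^{0,L},\qquad v_j:=\big\<v,\varphi_j^{0,L}\big\>,
\end{equation}
for an auxiliary vector $v$ and a coupling $\gamma\in\R$ built from the Krein resolvent formula. The $\ell=0$ overlap matrix thus takes the Cauchy form $\bigparens{c_k\overline{v_j}/(\lambda_j^0-\mu_k^0)}_{1\le j,k\le N_0}$. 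I would then apply the Cauchy determinant formula and read off $\abs{c_k}^2$ and $\abs{v_j}^2$ in closed form from the classical rank-one identity
\begin{equation}
\prod_{k=1}^\infty\frac{\mu_k^0-z}{\lambda_k^0-z} = 1+\gamma\big\<v,(h^0_L-z)^{-1}v\big\>,
\end{equation}
whose residue at $z=\lambda_j^0$ yields $\abs{v_j}^2$ as a ratio of products of eigenvalue differences, and whose first derivative at $z=\mu_k^0$ (where the left-hand side vanishes) yields $\abs{c_k}^2$ in the same form. Substituting these expressions into the Cauchy determinant and telescoping the double products with $j,k\le N_0$ against the full products (over all $j,k$) then produces exactly the claimed expression $\prod_{j\le N_0}\prod_{k>N_0}\cdots$.

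The main obstacle I anticipate is bookkeeping rather than conceptual depth: one has to track carefully how the four double-product regions $\{j,k\le N\}$, $\{j\le N<k\}$, $\{k\le N<j\}$, $\{j,k>N\}$ recombine after substitution, in such a way that every factor indexed by a pair with both $j,k\le N$ cancels and only the off-diagonal rectangular block survives. The infinite-dimensional setting further requires justification of the meromorphic identity above on all of $\C$ and of the convergence of the resulting infinite product over $k>N$; this should be a consequence of the Weyl-type asymptotics $\lambda_k^0\sim(\pi k/L)^2$ together with the interlacing bound $\mu_k^0-\lambda_k^0=\mathrm{O}(1/k)$, which make the logarithm of each factor summable.
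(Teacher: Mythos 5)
Your overall architecture coincides with the paper's: put the $\ell=0$ overlap matrix in Cauchy form, evaluate it by the Cauchy determinant formula, recover the moduli of the eigenvector components as residues of the infinite product $\prod_k(\mu_k-z)/(\lambda_k-z)$, and telescope. Two of your steps, however, do not hold as written. First, there is no vector $v\in L^2((0,L))$ with $\psi_k^{0,L}=c_k(h^0_L-\mu_k^0)^{-1}v$: the operators $h^0_L$ and $h^0_{\alpha,L}$ are different self-adjoint extensions and do \emph{not} differ by an additive rank-one term $\gamma|v\rangle\langle v|$. Integration by parts shows the would-be coefficients are $v_j=\overline{\varphi_j'(0)}$ up to constants, and $\sum_j|v_j|^2=\infty$, so $v$ exists only in the $\mathcal{H}_{-2}$-scale and the identity $\prod_k(\mu_k-z)/(\lambda_k-z)=1+\gamma\<v,(h^0_L-z)^{-1}v\>$ is not defined as stated. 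Second, the bound $\mu_k^0-\lambda_k^0=\mathrm{O}(1/k)$ is false: from $\sqrt{\mu_n}=\sqrt{\lambda_n}-\delta_\alpha(\sqrt{\mu_n})/L$ and $\lim_{k\to\infty}\delta_\alpha(k)=\pi/2$ one gets $\mu_n-\lambda_n\sim-\pi^2 n/L^2$, i.e.\ the difference \emph{grows} linearly in $n$ (about half the level spacing), so your summability argument for the infinite product fails as given. (The product does converge, but because each factor is $1+\mathrm{O}(j/k^3)$, not because $\mu_k-\lambda_k$ decays.)

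The paper repairs both defects with one move, which you should adopt: apply the general rank-one product formula not to $h^0_L$, $h^0_{\alpha,L}$ but to their resolvents $A=(h^0_L+E)^{-1}$, $B=(h^0_{\alpha,L}+E)^{-1}$, which by \eqref{delta:lemma:rank-1} genuinely differ by $|\eta\rangle\langle\eta|$ with $\eta\in L^2$, and whose eigenvalues $(\lambda_k+E)^{-1}$, $(\mu_k+E)^{-1}$ have summable differences of order $L^2/k^3$. The cross-ratio in the claimed product is invariant under the M\"obius map $\lambda\mapsto(\lambda+E)^{-1}$, which converts the resolvent statement back into one for the original eigenvalues; this is exactly the computation closing the proof of Lemma \ref{prod:thm2}. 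Note also that the paper does not treat the infinite-product identity for $1+\<\phi,(A-z)^{-1}\phi\>$ as classical but proves it (Lemma \ref{determinant:lemma}, via an ODE argument and the Feynman--Hellmann theorem); if you prefer to cite it, you need a reference that covers this infinite-dimensional setting under the summability hypothesis \eqref{determinant:assumption2}.
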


We start with proving this product representation for general pairs of compact  operators which differ by a 
rank-1-perturbation in Section \ref{sec:prod}. 
We apply this to our setting in Section \ref{proof:thm:main} and prove Theorem  \ref{main:thm:3d}.

\section{Representation of the ground-state overlap}\label{sec:prod}

In this section we prove a quite general representation for determinants of eigenvectors 
of pairs of operators which differ by a rank-1-perturbation. 
The main result in this section, Theorem \ref{prod:thm1}, will be the key to the proof of 
Theorem \ref{main:thm:3d}.

Let $\H$ be a separable infinite-dimensional Hilbert space and $A:\H\to \H$ be a compact, linear and self-adjoint operator.
Moreover, we assume $A \ge 0$
with $\text{ker}(A)=\{0\}$. We define
\begin{equation}\label{determinant:def1}
 B:=A+|\phi\rangle \langle \phi|
\end{equation}
for some $0\neq\phi\in\H$.
We write $\alpha_1\ge \alpha_2\ge \cdots$ and $\beta_1\ge\beta_2\ge\cdots$ for the non-increasing sequences 
of eigenvalues of $A$, respectively $B$
and
denote by $\left(\varphi_j\right)_{j\in\N}$ and $\left(\psi_k\right)_{k\in\N}$ the corresponding normalized eigenvectors. Since $A$ and $B$ differ by a rank-1-perturbation, the min-max theorem
implies that the eigenvalues interlace.
For simplicity we assume in addition the following strict interlacing condition 
\begin{equation}\label{determinant:assumption}
 \beta_1>\alpha_1>\beta_2>\alpha_2>\cdots.
\end{equation}
In particular, $\beta_k\neq\alpha_j$ for all $j,k\in \N$.
Furthermore, the above implies cyclicity of $\phi$.
Assumption \eqref{determinant:assumption} is not necessary but simplifies notation and computations. In the general case one has to consider the restriction to the cyclic subspace generated by the perturbation
$\phi$. But our application will satisfy the above interlacing condition, therefore,
we assume it.
Moreover,  
the eigenvalues satisfy
\begin{equation}\label{determinant:assumption2}
 \sum_{n=1}^\infty \big(\beta_n-\alpha_n\big)\le \beta_1-\alpha_1 + \sum_{n=2}^\infty \big(\alpha_{n-1}-\alpha_{n}\big)<\infty,  %\norm{\phi}^2<\infty.
\end{equation}
where the finiteness follows from the convergence of the sequence $(\alpha_n)_{n\in\N}$ to $0$. 

\begin{theorem}\label{prod:thm1}
Let $N\in\N$. We assume condition
 \eqref{determinant:assumption} to hold. Then,
 \begin{equation}
  \Big|  \det\Bigparens{\<\varphi_j, \psi_k\>}_{1\le j,k\le N}\Big|^2=\prod_{j=1}^N\prod_{k=N+1}^\infty\frac{\left|\beta_k-\alpha_j\right|\left|\alpha_k-\beta_j\right|}{\left|\alpha_k-\alpha_j\right|\left|\beta_k-\beta_j\right|}.
 \end{equation}
\end{theorem}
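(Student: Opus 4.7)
The plan is to exploit the rank-one structure to make every entry $\langle\varphi_j,\psi_k\rangle$ explicit, invoke the Cauchy determinant identity, and then use the Krein function of the perturbation to express the remaining ingredients (the squared Fourier coefficients of $\phi$ and a normalisation for each $\psi_k$) as infinite products over the eigenvalues. A telescoping cancellation will then produce the stated formula.

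First I would write each eigenvector of $B$ as $\psi_k = \nu_k(A-\beta_k)^{-1}\phi$, which follows from $(A+|\phi\rangle\langle\phi|)\psi_k = \beta_k\psi_k$ together with the fact that $\beta_k\not\in\sigma(A)$ by \eqref{determinant:assumption}, so $\nu_k$ is a nonzero normalising scalar. Expanding $\phi = \sum_j c_j\varphi_j$ with $c_j = \langle\varphi_j,\phi\rangle$ yields
$$\langle\varphi_j,\psi_k\rangle = \frac{\nu_k\, c_j}{\alpha_j-\beta_k},$$
so the matrix $M=(\langle\varphi_j,\psi_k\rangle)_{j,k=1}^N$ factors as $\mathrm{diag}(c_j)\cdot\bigl(1/(\alpha_j-\beta_k)\bigr)_{j,k}\cdot\mathrm{diag}(\nu_k)$. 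Applying the Cauchy determinant formula,
$$|\det M|^2 = \prod_{j=1}^N|c_j|^2\,\prod_{k=1}^N|\nu_k|^2\;\cdot\;\frac{\prod_{1\le i<j\le N}(\alpha_i-\alpha_j)^2(\beta_i-\beta_j)^2}{\prod_{j,k=1}^N(\alpha_j-\beta_k)^2}.$$

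Next I would introduce the Krein function $F(z):=\langle\phi,(A-z)^{-1}\phi\rangle = \sum_n|c_n|^2/(\alpha_n-z)$. The eigenvalue equation for $B$ rewrites as $F(\beta_k)=-1$, and a standard computation (using $\|\psi_k\|=1$) gives $|\nu_k|^{-2} = F'(\beta_k)$, while $|c_j|^2 = -\mathrm{Res}_{z=\alpha_j}F(z)$. The key ingredient is the product representation
$$1 + F(z) = \prod_{n=1}^\infty\frac{\beta_n-z}{\alpha_n-z},$$
valid on $\C\setminus(\{\alpha_n\}\cup\{\beta_n\})$: matching of zeros, poles and the behaviour as $|z|\to\infty$ determines the constant, and convergence of the product is guaranteed by the summability condition \eqref{determinant:assumption2}. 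Reading off the residue and derivative,
$$|c_j|^2 = (\beta_j-\alpha_j)\prod_{n\neq j}\frac{\beta_n-\alpha_j}{\alpha_n-\alpha_j},\qquad |\nu_k|^2 = (\beta_k-\alpha_k)\prod_{n\neq k}\frac{\alpha_n-\beta_k}{\beta_n-\beta_k},$$
and the strict interlacing \eqref{determinant:assumption} makes every factor positive.

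The last step is bookkeeping. Splitting each infinite product over $n\neq j$ (resp.\ $n\neq k$) into the piece with $n\in\{1,\dots,N\}\setminus\{j\}$ and the piece with $n\ge N+1$, the $n\le N$ contributions combine with the Cauchy determinant factors above. Recognising $\prod_{j\neq n\in P_N}|\alpha_j-\alpha_n| = \prod_{j<n\le N}|\alpha_j-\alpha_n|^2$ and similarly for the $\beta$'s, while
$$\prod_{\substack{j\neq n\\ j,n\in P_N}}|\beta_n-\alpha_j|\;\prod_{\substack{k\neq n\\ k,n\in P_N}}|\alpha_n-\beta_k| = \frac{\prod_{j,k\in P_N}|\alpha_j-\beta_k|^2}{\prod_{j=1}^N(\beta_j-\alpha_j)^2},$$
all finite pieces cancel exactly against the Cauchy ratio, leaving only the $n\ge N+1$ contributions, which assemble into
$$\prod_{j=1}^N\prod_{k=N+1}^\infty\frac{|\beta_k-\alpha_j|\,|\alpha_k-\beta_j|}{|\alpha_k-\alpha_j|\,|\beta_k-\beta_j|}.$$
The principal obstacle is justifying the Krein product representation with the correct normalisation and ensuring the relevant infinite products converge absolutely; once that is in place the remainder is essentially an exercise in Cauchy's identity.
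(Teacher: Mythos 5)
Your proposal follows essentially the same route as the paper: the rank-one structure gives $\langle\varphi_j,\psi_k\rangle=\nu_k c_j/(\alpha_j-\beta_k)$, the Cauchy determinant formula evaluates the resulting matrix, the squared coefficients are read off as residues/derivatives of the product representation of the Krein function, and the finite pieces cancel exactly as in the paper's final computation (your bookkeeping identities are correct). Two variations are worth noting. First, you obtain $|\nu_k|^2=|\langle\phi,\psi_k\rangle|^2$ from the normalisation $\|\psi_k\|=1$ via $|\nu_k|^{-2}=F'(\beta_k)$, whereas the paper derives it as a residue of a \emph{second} product representation, $\langle\phi,(B-z)^{-1}\phi\rangle-1=b\prod_n(\alpha_n-z)/(\beta_n-z)$ (Lemma \ref{determinant:lemma}(ii)); your version needs only the single product for $1+F$ and pins down its constant as $1$ from the behaviour as $z\to-\infty$, which is a genuine economy. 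Second, and this is where the gap lies: your justification of $1+F(z)=\prod_n(\beta_n-z)/(\alpha_n-z)$ by ``matching of zeros, poles and the behaviour at infinity'' is not a proof as stated. Since $A$ is compact, the $\alpha_n$ and $\beta_n$ accumulate at $0$, so both sides are meromorphic only on $\C\setminus\{0\}$ with a non-isolated singularity at $0$; a function analytic and zero-free there with the right asymptotics at infinity need not be constant (multiply by $\e^{1/z}$). To close this you must use more than the location of zeros and poles --- e.g.\ that \emph{both} sides are Herglotz (for the product this follows from the interlacing \eqref{determinant:assumption}, and then the exponential Herglotz representation forces the ratio to be constant), or the paper's alternative: show that $F$ and $\langle\phi,(B-\cdot)^{-1}\phi\rangle-1$ satisfy the same first-order ODE \eqref{prod:lemma1:eq3}, which is the content of Lemma \ref{determinant:lemma} together with the Feynman--Hellmann computation of Lemma \ref{Feyn:Hell}. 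You correctly identify this as the principal obstacle, but the argument you sketch for it does not yet suffice.

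One minor point in your favour: your observation that $F(\beta_k)=-1$ and $F'(\beta_k)=\|(A-\beta_k)^{-1}\phi\|^2>0$ also gives, for free, that the zeros of $1+F$ are \emph{exactly} the $\beta_k$ and are simple, which is needed for the residue computations and is implicit rather than spelled out in the paper.
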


\begin{proof}[Proof of Theorem \ref{prod:thm1}] 
We use 
 the eigenvalue equations and assumption \eqref{determinant:assumption} to obtain for all $j,k\in\N$
\begin{equation}
 \<\varphi_j,\psi_k\>=\frac{\<\varphi_j,\phi\>\<\phi,\psi_k\>}{\beta_k-\alpha_j}.
\end{equation}
Hence, the multi-linearity of the determinant implies
\begin{align}
& \Big|  \det\Bigparens{\<\varphi_j, \psi_k\>}_{1\le j,k\le N}\Big|^2\notag\\
=& \Big|  \det\Bigparens{\frac{\<\varphi_j,\phi\>\<\phi,\psi_k\>}{\beta_k-\alpha_j}}_{1\le j,k\le N}\Big|^2\nonumber\\
=& \bigg(\prod_{j=1}^N\prod_{k=1}^N\big|\<\varphi_j,\phi\>\<\phi,\psi_k\>\big|^2\bigg) \Big|\det\Bigparens{\frac 1 {\beta_k-\alpha_j}}_{1\le j,k\le N}\Big|^2.\label{determinant}
\end{align}
Now, the remaining determinant can be computed explicitly. 
We use the Cauchy determinant formula to evaluate this,
see e.g. \cite[Lem. 7.6.A]{zbMATH06257273}, and end up with 
\begin{align}
\eqref{determinant}
= \bigg(\prod_{j=1}^N\prod_{k=1}^N\big|\<\varphi_j,\phi\>\<\phi,\psi_k\>\big|^2\bigg)
\frac{\prod_{j,k=1,j\neq k}^N\left|\beta_k-\beta_j\right|\left|\alpha_j-\alpha_k\right|}{\prod_{j,k=1}^N\left|\beta_k-\alpha_j\right|^2}.\label{det:eq3}
\end{align}
Corollary \ref{cor:det} below yields
\begin{align}
\eqref{det:eq3}
=&   
\bigg(\prod_{k=1}^N 
 \prod_{\substack{l=1\\l\neq k}}^\infty\frac{ \left|\alpha_l-\beta_k\right|}{ \left|\beta_l-\beta_k\right|}\bigg)
 \bigg(\prod_{j=1}^N 
 \prod_{\substack{l=1\\l\neq j}}^\infty\frac{ \left|\beta_l-\alpha_j\right|}{\left|\alpha_l-\alpha_j\right|}\bigg)\nonumber
 \prod_{\substack{j,k=1\\j\neq k}}^N\frac{\left|\beta_k-\beta_j\right|\left|\alpha_j-\alpha_k\right|}{\left|\beta_k-\alpha_j\right|^2}\\
 =&
 \prod_{j=1}^N\prod_{k=N+1}^\infty \frac{\left|\beta_k-\alpha_j\right|\left| \alpha_k-\beta_j\right|}{ |\beta_k-\beta_j||\alpha_j-\alpha_k|}.
\end{align}
This gives the claim, where we remark that by the finiteness of \eqref{determinant:assumption2} 
all products in the latter converge absolutely. 
\end{proof}

To complete the proof, we continue with computing the resolvents of the operators $A$ and $B$ in terms of their eigenvalues.

\begin{lemma}\label{determinant:lemma}
We assume %\eqref{determinant:assumption2} and 
\eqref{determinant:assumption}. 
Then, there exist $a,b\in \R$ with $ab=-1$ such that
 \begin{enumerate} 
  \item [(i)] for all $z\in\varrho(A)$
      \begin{equation}\label{prod:lemma1:eq1}
       \<\phi,\frac 1 {A-z}\phi\>+1= a\prod_{k=1}^\infty \frac{\beta_k-z}{\alpha_k-z}\,,
      \end{equation}
  \item [(ii)] for all $z\in\varrho(B)$
      \begin{equation}\label{prod:lemma1:eq2}
       \<\phi, \frac 1 {B-z}\phi\>- 1=  b\prod_{n=1}^\infty  \frac{\alpha_n-z}{\beta_n-z}\,.
       \end{equation}
 \end{enumerate}
\end{lemma}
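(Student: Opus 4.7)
The plan is to reduce (ii) to (i) via a rank-$1$ resolvent identity, then establish (i) by matching the zero/pole structure of $1 + f(z)$ with that of the infinite product. Set $f(z) := \langle\phi, (A-z)^{-1}\phi\rangle$ and $g(z) := \langle\phi, (B-z)^{-1}\phi\rangle$. Since $B - A = |\phi\rangle\langle\phi|$, the second resolvent identity gives $(A-z)^{-1} - (B-z)^{-1} = (A-z)^{-1}|\phi\rangle\langle\phi|(B-z)^{-1}$, and pairing against $\phi$ on both sides yields $f - g = fg$, i.e.
\[
  g(z) - 1 = -\frac{1}{1 + f(z)}.
\]
Consequently, once (i) is established with some constant $a$, assertion (ii) follows immediately with $b = -1/a$, giving $ab = -1$.

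For (i), I would compare $F(z) := 1 + f(z)$ with $G(z) := \prod_{k=1}^\infty \frac{\beta_k - z}{\alpha_k - z}$; the product converges uniformly on compact subsets of $\varrho(A) \cap \varrho(B)$ thanks to \eqref{determinant:assumption2}. The function $F$ has simple poles at each $\alpha_k$ with nonzero residues $-|\langle\phi, e_k\rangle|^2$, nonvanishing by the cyclicity of $\phi$ forced by the strict interlacing \eqref{determinant:assumption}. For the zeros of $F$: if $B\psi_k = \beta_k\psi_k$, then $(A - \beta_k)\psi_k = -\phi\langle\phi,\psi_k\rangle$, and taking the inner product with $\phi$ together with $\langle\phi,\psi_k\rangle \neq 0$ forces $1 + f(\beta_k) = 0$. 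Since $G$ has exactly these simple poles and zeros by construction, the ratio $F/G$ is holomorphic and nowhere vanishing on $\varrho(A) \cap \varrho(B)$; using that both $F(z) \to 1$ and $G(z) \to 1$ as $z \to -\infty$ along the real axis (by spectral calculus for $f$, and by dominated convergence for the product using \eqref{determinant:assumption2}) then identifies $F/G$ with a real constant $a$.

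The main obstacle will be making rigorous the constancy of $F/G$ in the presence of the accumulation point $0$ of eigenvalues, where neither function is meromorphic on all of $\mathbb{C}$. The cleanest route is via finite-dimensional truncation: let $P_n$ be the spectral projection of $A$ onto its first $n$ eigenvectors, set $A_n := P_n A P_n$ and $B_n := A_n + |P_n\phi\rangle\langle P_n\phi|$, and apply the matrix determinant lemma to obtain the stated identity exactly in finite dimensions. The passage $n \to \infty$ then uses norm-resolvent convergence $A_n \to A$ and $B_n \to B$ (valid by compactness of $A$) on the left-hand side, and the summability \eqref{determinant:assumption2} together with the interlacing \eqref{determinant:assumption} to control the tails of the product on the right-hand side, yielding uniform convergence on compact subsets of $\varrho(A) \cap \varrho(B)$ and hence the claimed identity with $a \in \mathbb{R}$.
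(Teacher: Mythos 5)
Your proposal is correct, but it proves the lemma by a genuinely different route than the paper. The reduction of (ii) to (i) via the rank-one resolvent identity, i.e. $\bigl(1+\langle\phi,(A-z)^{-1}\phi\rangle\bigr)\bigl(\langle\phi,(B-z)^{-1}\phi\rangle-1\bigr)=-1$, is exactly the paper's equation \eqref{prod:lemma1:eq4}, so that part coincides. For (i), however, the paper neither matches zeros and poles nor truncates: it forms $F(z)=\prod_n\frac{\alpha_n-z}{\beta_n-z}$, computes the logarithmic derivative $F'/F=\lim_N\sum_{l=1}^N\bigl(\frac{1}{\beta_l-z}-\frac{1}{\alpha_l-z}\bigr)$, identifies this sum with $-\bigl\langle (A-\bar z)^{-1}\phi,(B-z)^{-1}\phi\bigr\rangle$ through a Feynman--Hellmann argument (Lemma \ref{Feyn:Hell}), and then observes that $z\mapsto\langle\phi,(B-z)^{-1}\phi\rangle-1$ satisfies the same first-order linear ODE, whence the two functions are proportional. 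Your finite-dimensional truncation with the matrix determinant lemma avoids the Feynman--Hellmann machinery entirely and is more elementary; it also yields the explicit constants $a=1$, $b=-1$ (both sides of \eqref{prod:lemma1:eq1} tend to $1$ as $z\to-\infty$ along $\R$), which is slightly sharper than the paper's ``there exist $a,b$ with $ab=-1$''. You are also right that the pole/zero comparison alone cannot close the argument, since a nonvanishing holomorphic function on $\C\setminus\{0\}$ tending to $1$ at infinity need not be constant (e.g. $\e^{1/z}$). One detail to repair when writing this out: the uniform-in-$n$ tail control for $\prod_{k}\frac{\beta_k^{(n)}-z}{\alpha_k-z}$ does not follow from \eqref{determinant:assumption2}, which concerns the limiting eigenvalues only; use instead the interlacing $\alpha_k\le\beta_k^{(n)}\le\alpha_{k-1}$ (so $0\le\beta_k^{(n)}-\alpha_k\le\alpha_{k-1}-\alpha_k$, a summable telescoping dominant) or the trace identity $\sum_k\bigl(\beta_k^{(n)}-\alpha_k\bigr)=\lVert P_n\phi\rVert^2\le\lVert\phi\rVert^2$, combined with $\beta_k^{(n)}\to\beta_k$ from the norm convergence $B_n\to B$.
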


\begin{corollary}\label{cor:det}
 Let $j,k\in\N$. Under the assumption %\eqref{determinant:assumption2} and 
 \eqref{determinant:assumption} 
 \begin{align}
   \left|\<\varphi_j,\phi\>\<\psi_k,\phi\>\right|^2
   =
   \left|\beta_j-\alpha_j\right|\left|\alpha_k-\beta_k\right|
   \bigg(\prod_{\substack{l=1\\l\neq j}}^\infty\frac{ \left|\beta_l-\alpha_j\right|}{\left|\alpha_l-\alpha_j\right|}\bigg)
   \bigg(\prod_{\substack{l=1\\l\neq k}}^\infty\frac{ \left|\alpha_l-\beta_k\right|}{ \left|\beta_l-\beta_k\right|}\bigg).
 \end{align}
\end{corollary}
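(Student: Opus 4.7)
The plan is to read off the squared moduli $|\langle\phi,\varphi_j\rangle|^2$ and $|\langle\phi,\psi_k\rangle|^2$ individually from Lemma \ref{determinant:lemma} by comparing, at a single pole, the spectral representation of the matrix element $\langle\phi,(A-z)^{-1}\phi\rangle$ (respectively $\langle\phi,(B-z)^{-1}\phi\rangle$) with its product representation. Once these two quantities are known in closed form as products, the corollary follows by multiplying them together and using the relation $ab=-1$ from Lemma \ref{determinant:lemma}.

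Concretely, I would first apply the spectral theorem to write
\begin{equation}
\langle\phi,(A-z)^{-1}\phi\rangle=\sum_{l=1}^\infty\frac{|\langle\phi,\varphi_l\rangle|^2}{\alpha_l-z},
\end{equation}
which exhibits a simple pole at $z=\alpha_j$ with residue $-|\langle\phi,\varphi_j\rangle|^2$. On the other hand, factoring the pole out of \eqref{prod:lemma1:eq1} gives
\begin{equation}
a\prod_{l=1}^\infty\frac{\beta_l-z}{\alpha_l-z}=\frac{\beta_j-z}{\alpha_j-z}\cdot a\prod_{\substack{l=1\\l\neq j}}^\infty\frac{\beta_l-z}{\alpha_l-z},
\end{equation}
whose residue at $z=\alpha_j$ is $-a(\beta_j-\alpha_j)\prod_{l\neq j}(\beta_l-\alpha_j)/(\alpha_l-\alpha_j)$. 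Equating residues yields
\begin{equation}
|\langle\phi,\varphi_j\rangle|^2=a(\beta_j-\alpha_j)\prod_{\substack{l=1\\l\neq j}}^\infty\frac{\beta_l-\alpha_j}{\alpha_l-\alpha_j}.
\end{equation}
The same argument applied to $\langle\phi,(B-z)^{-1}\phi\rangle$ at $z=\beta_k$, using \eqref{prod:lemma1:eq2}, produces the corresponding formula
\begin{equation}
|\langle\phi,\psi_k\rangle|^2=b(\alpha_k-\beta_k)\prod_{\substack{l=1\\l\neq k}}^\infty\frac{\alpha_l-\beta_k}{\beta_l-\beta_k}.
\end{equation}
Multiplying these two identities, taking absolute values, and exploiting $|ab|=1$ then gives the claim.

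I do not expect a serious obstacle here: the strict interlacing \eqref{determinant:assumption} guarantees that the isolated poles are simple and that the extracted products have no vanishing factors, while the summability assumption \eqref{determinant:assumption2} ensures absolute convergence of the infinite products appearing in \eqref{prod:lemma1:eq1}--\eqref{prod:lemma1:eq2} as well as of the reduced products after factoring out a single term. The only point that requires minor care is sign bookkeeping; the assumption $ab=-1$ from Lemma \ref{determinant:lemma} is exactly what converts the two signed product identities into a positive identity after multiplication and passage to absolute values.
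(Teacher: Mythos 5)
Your proposal is correct and follows essentially the same route as the paper: the paper likewise extracts $|\langle\varphi_j,\phi\rangle|^2$ and $|\langle\psi_k,\phi\rangle|^2$ as residues of $\langle\phi,(A-z)^{-1}\phi\rangle$ at $z=\alpha_j$ and of $\langle\phi,(B-z)^{-1}\phi\rangle$ at $z=\beta_k$, identified with the residues of the product representations from Lemma \ref{determinant:lemma}, and then multiplies the two identities and uses $|ab|=1$. The only point you leave implicit, which is immediate anyway, is that the constant terms $\pm 1$ in \eqref{prod:lemma1:eq1}--\eqref{prod:lemma1:eq2} are holomorphic and hence do not contribute to the residues.
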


\begin{proof}[Proof of Corollary \ref{cor:det}]
Using Lemma \ref{determinant:lemma} we compute the residue of the resolvents
\begin{align}
 \left|\<\varphi_j,\phi\>\right|^2&=\lim_{z\to\alpha_j} \left(\alpha_j-z\right)\<\phi,\frac 1 {A-z} \phi\>\nonumber\\
                                  &=\lim_{z\to\alpha_j} \left(\alpha_j-z\right)a\prod_{l=1}^\infty \frac{\left( \beta_l-z\right)}{\left(\alpha_l-z\right)}
                                  =a\left(\beta_j-\alpha_j\right)\prod_{\substack{l=1\\l\neq j}}^\infty
				    \frac{ \left(\beta_l-\alpha_j\right)}{\left(\alpha_l-\alpha_j\right)}\label{det:eq1}
\end{align}
and along the same line %for some constant $0\neq a\in\R$
\begin{equation}\label{det:eq2}
 \left|\<\psi_k,\phi\>\right|^2=
 b\left(\alpha_k-\beta_k\right)
 \prod_{\substack{l=1\\l\neq k}}^\infty\frac{ \left(\alpha_l-\beta_k\right)}{ \left(\beta_l-\beta_k\right)}.
\end{equation}
Taking the absolute value and using $|ab|=1$, we get the result. 
\end{proof}

\begin{proof}[Proof of Lemma \ref{determinant:lemma}]
 First note that by the finiteness of \eqref{determinant:assumption2} the sequences
 \begin{equation}
  \bigg(\prod_{k=1}^N \frac{\beta_k-z}{\alpha_k-z}\bigg)_{N\in\N} 
  \quad \text{and} \quad \bigg(\prod_{n=1}^N \frac{\alpha_n-z}{\beta_n-z}\bigg)_{N\in\N}
 \end{equation}
 converge locally uniformly for all $z\in \varrho(A)\cap\varrho(B)$, see \cite[Thm. 252]{zbMATH00861508}.
 Therefore, the limits 
 \begin{equation}
  F(z):= \prod_{n=1}^\infty  \frac{\alpha_n-z}{\beta_n-z}\quad \text{and}\quad G(z):=\prod_{k=1}^\infty \frac{\beta_k-z}{\alpha_k-z}
 \end{equation}
 are well-defined analytic functions on $\varrho(A)\cap\varrho(B)$, 
 which fulfill $FG=1$. %Fix $E\in\varrho(A)\cap\varrho(B)$, 
 Due to the locally uniform convergence, the derivative of $F$ satisfies
 \begin{align}
  F'(z) = &
  \lim_{N\to\infty}\sum_{l=1}^N \prod_{\substack{n=1 \\ n\neq l}}^N \frac{\alpha_n-z}{\beta_n-z} 
  \frac{\d }{\d z}  \frac{\alpha_l-z}{\beta_l-z} \notag\\
   = &
  \lim_{N\to\infty}\sum_{l=1}^N \prod_{\substack{n=1 \\ n\neq l}}^N 
  \frac{\alpha_n-z}{\beta_n-z} \frac{\alpha_l-\beta_l}{(\beta_l-z)^2}
   = 
  F(z) \lim_{N\to\infty}\sum_{l=1}^N \Big(\frac 1 {\beta_l-z} - \frac 1 {\alpha_l-z} \Big)\label{lemma:proof:eq5}
  \end{align}
  for all $z\in\varrho(A)\cap\varrho(B)$.
%  Next we restrict ourselves further to $E\in \varrho(A)\cap\varrho(B)\cap \R$ and 
  We apply Lemma \ref{Feyn:Hell} below and obtain
  \begin{align}
   \eqref{lemma:proof:eq5}
   = &
  -F(z)\, \big\langle \frac 1 {A-z}\phi, \frac 1 {B-z}\phi \big\rangle.\label{lemma:proof:eq2}
 \end{align}
 Now, the resolvent identity implies for all $z\in \varrho(A)\cap \varrho(B)$
 \begin{equation}\label{lemma:eq:resolvent}
  \frac 1 {B-z} - \frac 1 {A-z} = - \frac 1 {A-z}\phi \big\langle \frac 1 {B-\bar z}\phi,\, \cdot \, \big\rangle
 \end{equation}
 which provides the equality
 \begin{equation}
  \frac 1 {A- z}\phi = \frac 1 {1- \langle \frac 1 {B-\bar z}\phi,\phi \rangle} \frac 1 {B- z}\phi.
 \end{equation}
 Inserting this into \eqref{lemma:proof:eq2}, we see that $F$ solves the differential equation
 \begin{equation}\label{prod:lemma1:eq3}
  F'(E)=  
  F(E) \frac 1 {\langle \phi, \frac 1 {B-E} \phi \rangle-1} \big\langle \phi, \Big(\frac 1 {B-E}\Big)^2\phi\big\rangle
 \end{equation}
 at least for all $E\in \varrho(A)\cap \varrho(B)\cap\R$.
 On the other hand
 the resolvent of $B$ is analytic in $\varrho(B)$ and
 the function  $t\mapsto\langle\phi, \frac 1 {B-t} \phi\rangle-1$, $t<0$,
  solves the above ODE \eqref{prod:lemma1:eq3} as well.
 Now, the general solution to this ODE is $f(t)=x_0\exp\Big( \int_{t_0}^t \d s\, \frac 1 {\langle \phi, \frac 1 {B-s} \phi \rangle-1} \big\langle \phi, \left(\frac 1 {B-s}\right)^2\phi\big\rangle\Big)$,
 for some initial condition $(t_0,x_0)$. Note that the functions $t\mapsto F(t)$ and $t\mapsto\langle\phi, \frac 1 {B-t} \phi\rangle-1$ are non-zero, thus
 $\langle\phi, \frac 1 {B-t} \phi\rangle-1=cF(t)$ for some $c\neq 0$. 
 This and the identity theorem for analytic functions give the claim.
Equation \eqref{prod:lemma1:eq1} follows from $F(z)G(z)=1$ and the identity
\begin{equation}\label{prod:lemma1:eq4}
 \Big(\big\langle\phi, \frac 1 {B-z} \phi\big\rangle-1\Big) \Big(\big\langle\phi,\frac 1 {A-z}\phi\big\rangle+ 1 \Big) = - 1 ,
\end{equation}
for all $z\in\varrho(A)\cap\varrho(B)$ 
which is a consequence of \eqref{lemma:eq:resolvent}.
\end{proof}

 \begin{lemma}\label{Feyn:Hell}
 Let $z \in \varrho(A)\cap \varrho(B)$ and assume \eqref{determinant:assumption}. Then, we obtain the following identity
  \begin{align}\label{Feyn:Hell10}
   \sum_{l=1}^\infty \Big(\frac 1 {\beta_l-z} - \frac 1 {\alpha_l-z} \Big)
   %= \tr\left[ \frac 1 {B-z} - \frac 1 {A-z} \right]
   = - \big\langle \frac 1 {A - \bar z}\phi, \frac 1 {B-z}\phi \big\rangle.
  \end{align}
 \end{lemma}

 Let us point out that in the finite-dimensional case the above equality follows directly 
 from the resolvent equation, \eqref{lemma:eq:resolvent}.
 Nevertheless, the infinite-dimensional
 case is a bit more involved due to convergence issues.

 \begin{proof}%[Proof of Lemma \ref{Feyn:Hell}]
 For $\lambda\in\R$
 we define the operator 
 \begin{equation}
 A(\lambda):=A+\lambda|\phi\>\<\phi|
 \end{equation}
 and write $\alpha_l(\lambda)$ for the $l$th eigenvalue
 counted from above and $\varphi_l(\lambda)$ for the corresponding eigenvector. 
 Moreover, we remark that $\alpha_l(1)$ and $\varphi_l(1)$ correspond to $\beta_l$ and $\psi_l$. 
 Assumption \eqref{determinant:assumption} and the definite sign of the perturbation imply
 that the eigenvalues of $A(\lambda)$ are non-degenerate for all $\lambda\in[0,1]$.
 Thus, standard results, see
 \cite[Chap. XII]{MR0493421}, give differentiability of the eigenvalues
 for all $\lambda\in(0,1)$ and we apply the Feynman-Hellmann theorem, see e.g. \cite{MR968677}, 
 to deduce for all $l\in\N$ and $ \lambda\in (0,1)$
 \begin{equation}\label{Fey-Hell:eq2}
  \alpha_l'(\lambda) = |\<\varphi_l(\lambda),\phi\>|^2.
 \end{equation}
 Hence, we compute
 for $z\in\C\backslash\R$ %with $\text{Im}\,z\neq0$
 \begin{align}
  \sum_{l=1}^\infty \Big(\frac 1 {\beta_l-z} - \frac 1 {\alpha_l-z} \Big)
  = &
 -  \sum_{l=1}^\infty \int_0^1\d \lambda\, 
  \Big( \frac 1 {\alpha_l(\lambda) - z}\Big)^2 \alpha'_l(\lambda)\nonumber\\
  = &
 -  \sum_{l=1}^\infty \int_0^1\d \lambda\, \Big( \frac 1 {\alpha_l(\lambda) - z}\Big)^2
  \big|\big\<\varphi_l(\lambda),\phi\big\>\big|^2.\label{L}%\nonumber\\
  \end{align}
  %Equation \eqref{Fey-Hell:eq2} gives
  The eigenvalue equation implies
  \begin{align}
   \eqref{L} = &
 -  \sum_{l=1}^\infty \int_0^1\d \lambda\, 
  \big\<\frac 1 {A(\lambda)-\bar z}\phi,\varphi_l(\lambda)\big\>\big\<\varphi_l(\lambda),\frac 1 {A(\lambda)-z}\phi\big\>\nonumber\\
  = &
  - \int_0^1\d \lambda\, \big\langle\phi, \Big(\frac 1 {A(\lambda) - z}\Big)^2 \phi\big \rangle,\label{Fey-Hell:eq3}
 \end{align}  
 where we used Fubini's theorem to interchange the integral with the sum and the fact that the vectors
 $\big(\varphi_l(\lambda)\big)_{l\in\N}$
 form an ONB. The resolvent identity \eqref{lemma:eq:resolvent} implies
 \begin{equation}\label{lemma:eq:resolvent2}
  \frac 1 {A(\lambda)-z} \phi = \frac 1 { 1+ \lambda \<\phi, \frac 1 {A-z}\phi\>} \frac 1 {A-z}\phi.
 \end{equation}
Therefore, we continue
\begin{align}
 \eqref{Fey-Hell:eq3}
 =&
 - \int_0^1\d \lambda\,\big\langle\phi, \Big(\frac 1 {A - z}\Big)^2 \phi\big \rangle
 \bigg( \frac 1 { 1+ \lambda \<\phi, \frac 1 {A-z}\phi\>}\bigg)^2\notag\\
 =&
 - 
 \big\langle\phi, \big(\frac 1 {A - z}\big)^2 \phi \big\rangle
 \int_0^1\d \lambda\, \frac{\d }{\d \lambda} \bigg(\frac 1 { 1+ \lambda \<\phi, \frac 1 {A-z}\phi\>}\bigg)
 \frac 1 {\langle\phi, \frac 1 {A - z} \phi \rangle}\notag\\
 =&
 \frac{\langle\phi, \big(\frac 1 {A - z}\big)^2 \phi \rangle}
      {\langle\phi, \frac 1 {A - z} \phi \rangle}
 \bigg(1 - \Big(\frac 1 { 1+ \<\phi, \frac 1 {A-z}\phi\>}\Big)\bigg)%\notag\\
 =
 - \frac
 {\langle\phi, \big(\frac 1 {A - z}\big)^2 \phi \rangle}
 { 1+ \<\phi, \frac 1 {A-z}\phi\>}.\label{Fey-Hell:eq30}
\end{align}
Equation \eqref{lemma:eq:resolvent2} with $\lambda=1$ provides the assertion for $z\in\C\backslash\R$, i.e.
\begin{equation}
 \eqref{Fey-Hell:eq30}
 = -
 \big\< \frac 1 {A-\bar z}\phi, \frac 1 {B-z}\phi \big\>.
\end{equation}
We note that both sides of \eqref{Feyn:Hell10} are continuous within $\varrho(A)\cap \varrho(B)$.
For the left hand side of \eqref{Feyn:Hell10}, this follows from assumption \eqref{determinant:assumption2} and
for the right hand side from the continuity of the resolvent. 
Therefore, we extend the result to all $z\in  \varrho(A)\cap \varrho(B)$.
 \end{proof}

\section{Proof of Theorem \ref{main:thm:3d}}\label{proof:thm:main}

We decompose the determinant according to the angular momentum decomposition 
\eqref{3-d:eq1}. 
This implies
\begin{equation}\label{proof:3d:eq2}
  \Big|  \det\Bigparens{\big\<\varphi_j^L, \psi_k^L\big\>}_{1\le j,k\le N_L(E)}\Big|^2= 
  \prod_{l\in\N_0}  \Big| \det\Bigparens{\big\<\varphi_j^L(\ell), \psi_k^L(\ell)\big\>}_{1\le j,k\le N^l_L(E)}\Big|^{2(2\ell+1)},
\end{equation}
where $\varphi_j^L(\ell)$ and $\psi_k^L(\ell)$ correspond to the radial part of the eigenfunctions 
lying in the $\ell$-th angular momentum channel
and $N_L^\ell(E)$ to the relative particle number in the $\ell$-th angular momentum channel.
More precisely, 
\begin{equation}\label{delta:rel-part-numb}
 N_L^\ell(E):=%\frac 1 {(2\ell+1)}\#\left\{\lambda_k^L(\ell):\ k\in\{1,\cdots,N_L\}\right\},
              \#\big\{\, k\in\N:\ \exists\, j\in\{1,\cdots,N_L\}\ \text{with}\ \lambda_k^L(\ell)=\lambda_j^L \,\big\}
\end{equation}
 where $\left(\lambda_k^L(\ell)\right)_{k\in\N}$ denote the eigenvalues of $h_L^\ell$. %\eqref{eigen:angular}.
Since we chose the eigenfunctions of $H_L$ and $H_{\alpha,L}$ to be the same in every angular momentum
channel $\ell\ge 1$ we obtain 
that only the $\ell=0$ term in the product \eqref{proof:3d:eq2} is different from $1$. Hence,  
\begin{equation}
 \Big|  \det\Bigparens{\big\<\varphi_j^L, \psi_k^L\big\>}_{1\le j,k\le N_L(E)}\Big|^2= 
 \Big|  \det\Bigparens{\big\<\varphi_j^L(0), \psi_k^L(0)\big\>}_{1\le j,k\le N^0_L(E)}\Big|^2.
\end{equation}

Thus, we reduced our problem to a problem on the half-axis, where the relative particle number satisfies

\begin{lemma}\label{lemma:thermolimitd3}
 Given $E>0$. Let $L$ and $N_L(E)\in\N$ such that $\frac {N_L(E)}{|B_L(0)|} \to \rho(E)$ as $L\to\infty$. 
 Then, 
 \begin{equation}
  \frac {N_L^0(E)}{L}\to \frac {\sqrt E}{\pi}=:\rho_0(E),
 \end{equation}
 as $L\to\infty$. 
\end{lemma}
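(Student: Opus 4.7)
The plan is to reduce the lemma to showing that the $N_L(E)$-th eigenvalue $\lambda_{N_L(E)}^L$ of $H_L$ converges to $E$ as $L\to\infty$; once this is established, the explicit spectrum of $h_L^0$ immediately produces the asserted limit.

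For the reduction, note that $h_L^0$ is $-\d^2/\d x^2$ on $(0,L)$ with Dirichlet boundary conditions at both endpoints, so its eigenvalues are $\lambda_n^L(0)=(n\pi/L)^2$, $n\in\N$. Through the decomposition \eqref{3-d:eq1} each such eigenvalue is also an eigenvalue of $H_L$. If $(n\pi/L)^2<\lambda_{N_L(E)}^L$, then $n$ is automatically counted by $N_L^0(E)$; if $(n\pi/L)^2>\lambda_{N_L(E)}^L$, it is not. Hence
\begin{equation*}
\#\bigl\{n\in\N:(n\pi/L)^2<\lambda_{N_L(E)}^L\bigr\}\le N_L^0(E)\le\#\bigl\{n\in\N:(n\pi/L)^2\le\lambda_{N_L(E)}^L\bigr\},
\end{equation*}
and both bounds equal $L\sqrt{\lambda_{N_L(E)}^L}/\pi+\Oh(1)$, so that $N_L^0(E)/L=\sqrt{\lambda_{N_L(E)}^L}/\pi+\oh(1)$.

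For the convergence of the threshold, set $N(\lambda,H_L):=\#\{j:\lambda_j^L\le\lambda\}$. Weyl's law for the Dirichlet Laplacian on $B_L(0)$ gives $N(\lambda,H_L)/|B_L(0)|\to\rho(\lambda)$ for every $\lambda>0$ as $L\to\infty$. Fix $\varepsilon>0$: combining this pointwise convergence at $\lambda=E\pm\varepsilon$ with the hypothesis $N_L(E)/|B_L(0)|\to\rho(E)$ and the strict monotonicity $\rho(E-\varepsilon)<\rho(E)<\rho(E+\varepsilon)$, we obtain $N(E-\varepsilon,H_L)<N_L(E)<N(E+\varepsilon,H_L)$ for all $L$ sufficiently large. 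Monotonicity of the counting function then yields $E-\varepsilon<\lambda_{N_L(E)}^L<E+\varepsilon$, and letting $\varepsilon\downarrow 0$ shows $\lambda_{N_L(E)}^L\to E$. Together with the previous paragraph this proves the lemma. The main obstacle is merely this transfer from the pointwise Weyl asymptotics to convergence of the threshold eigenvalue, and the strict monotonicity of $\rho$ together with the explicitly diagonal spectrum of $h_L^0$ makes it essentially mechanical: no quantitative surface term or uniform Weyl law is required.
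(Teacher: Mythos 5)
Your proposal is correct and follows essentially the same route as the paper: both arguments first deduce $\lambda_{N_L(E)}^L\to E$ from Weyl's law together with the strict monotonicity of $\rho$, and then exploit the explicit $\ell=0$ spectrum $\lambda_n^L(0)=(n\pi/L)^2$ to convert this into the limit of $N_L^0(E)/L$. The only (cosmetic) difference is that you sandwich $N_L^0(E)$ directly between counting functions at the threshold $\lambda_{N_L(E)}^L$, whereas the paper first argues $\lambda_{N_L^0(E)}^L(0)\to E$ and then compares with $\#\{k:\lambda_k^L(0)\le E\}$ via an $\epsilon$-window.
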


\begin{proof} For any $E>0$
 \begin{equation}
  \displaystyle\lim_{L\to\infty} \frac{\#\{k:\lambda_k^L\le E\}}{|B_L(0)|}=\rho(E)=\lim_{L\to\infty} \frac {N_L(E)}{|B_L(0)|},
 \end{equation}
 where the first equality follows from
  e.g. \cite[Sct. XIII.15]{MR0493421}. 
 Hence, we obtain for an arbitrary $\epsilon>0$ the inequalities
 \begin{equation}
  \#\{\,k:\lambda_k^L\le E-\epsilon\} \le N_L(E) \le \#\{\,k:\lambda_k^L\le E+\epsilon\}
 \end{equation}
 for $L$ large enough. Since $\rho$ is %given explicitly in \eqref{thesis-aoc-thermodynamic2} 
 %we know that it 
 is strictly increasing, we obtain $\lambda^L_{N_L(E)}\to E$. 
 Therefore, $\lambda_{N^0_L(E)}^L(0)\to E$ as well because otherwise there would be a gap in the spectrum of $h^0$
 by the definition of the relative particle number $N^0_L(E)$. 
 This implies for an arbitrary $\epsilon>0$ and $L$ large enough 
 \begin{align}
  \Big|\frac{N_L^0(E)}{L}- \frac{\#\{\,k:\lambda_k^{L}(0)\le E\}}{L}\Big|
  &\le \Big|\frac{\#\{\,k:\left(\frac{k\pi}{L}\right)^2\in (E-\epsilon, E+\epsilon)\}}{L}\Big|\nonumber\\
  & \le \frac c {\sqrt E} \epsilon,
 \end{align}
 for some constant $c$. Since $\#\{\,k:\lambda_k^{L}(0)\le E\}/L\to \rho_0(E)$, as $L\to\infty$, 
 this yields the claim. 
\end{proof}

Given \eqref{proof:3d:eq2} and Lemma \ref{lemma:thermolimitd3}, Theorem \ref{main:thm:3d} will follow from

\begin{theorem}\label{main:thm:half-axis}
Let $E>0$. Then,
\begin{equation}
  \Big|  \det\Bigparens{\big\<\varphi^L_j(0), \psi^L_k(0)\big\>}_{1\le j,k\le N_L}\Big |^2
  =
  L^{-\zeta(E)+\oh(1)}
\end{equation}
as $L\to\infty$, $N_L\in\N$ and $N_L/L\to \frac{\sqrt E}\pi$,
where
\begin{equation}
 \zeta(E):= \frac 1 {\pi^2} \delta_\alpha^2(\sqrt E)
\end{equation}
and $\delta_\alpha$ is given by Definition \ref{delta:def:phase}. 
\end{theorem}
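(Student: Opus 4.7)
The strategy is to apply Theorem \ref{prod:thm1} directly to (shifted resolvents of) the pair $h^0_L$, $h^0_{\alpha,L}$, whose eigenvalues strictly interlace by \eqref{delta:lemma:rank-1}, and then evaluate the resulting infinite product using the explicit Sturm--Liouville spectrum. Writing $\lambda_n^L=(n\pi/L)^2$, the transcendental equation $\tan(\sqrt\mu\,L)=-\sqrt\mu/(4\pi\alpha)$ yields $b_n:=\sqrt{\mu_n^L}\,L=n\pi-\delta_\alpha(\sqrt{\mu_n^L})$ in the repulsive case $\alpha>0$. In the attractive case $\alpha<0$ there is an additional negative eigenvalue $\mu_1^L\to-(4\pi\alpha)^2$, and the same formula $b_n=n\pi-\delta_\alpha(\sqrt{\mu_n^L})$ still holds for all positive eigenvalues $(n\ge 2)$ with $\delta_\alpha\in(\pi/2,\pi)$; the $j=1$ row of the double product will be treated separately as a boundary term.

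Setting $a_n=n\pi$, $\delta_n:=\delta_\alpha(\sqrt{\mu_n^L})$, and factoring $\lambda_k^L-\lambda_j^L=(a_k-a_j)(a_k+a_j)/L^2$ together with the three analogous differences, each term of the product decomposes into a \emph{difference factor}
\begin{equation*}
1-\frac{\delta_j\delta_k}{(k-j)\pi\bigl((k-j)\pi+\delta_j-\delta_k\bigr)}
\end{equation*}
and a \emph{sum factor} $1+\delta_j\delta_k/\bigl((k+j)\pi((k+j)\pi-\delta_j-\delta_k)\bigr)$. Taking logarithm and Taylor-expanding $\ln(1+x)$, the sum factor contributes only $O(1)$ overall (since $k+j>N_L\sim L$ forces $\sum_{j,k}1/(k+j)^2=O(1)$); the quadratic Taylor remainders together with the expansion of $\bigl((k-j)\pi+\delta_j-\delta_k\bigr)^{-1}$ in powers of $(\delta_j-\delta_k)/((k-j)\pi)$ produce $O(1)$ via the summable kernels $\sum_{j,k}(k-j)^{-n}<\infty$ for $n\ge 3$. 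The leading contribution reduces to
\begin{equation*}
\ln\bigl|\SO^{N_L}_L\bigr|^2=-\frac{1}{\pi^2}\sum_{j=1}^{N_L}\sum_{k=N_L+1}^\infty\frac{\delta_j\delta_k}{(k-j)^2}+O(1).
\end{equation*}

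The main computation is the asymptotics of this double sum. Setting $\delta:=\delta_\alpha(\sqrt E)$ and decomposing $\delta_j\delta_k=\delta^2+\delta(\delta_j-\delta)+\delta(\delta_k-\delta)+(\delta_j-\delta)(\delta_k-\delta)$, the constant piece yields exactly
\begin{equation*}
\delta^2\sum_{M=1}^\infty\frac{\min(M,N_L)}{M^2}=\delta^2\ln N_L+O(1)=\delta^2\ln L+O(1),
\end{equation*}
the last equality using Lemma \ref{lemma:thermolimitd3}. The correction terms are controlled by writing $\delta_n\approx\bar\delta(n/N_L)$ with $\bar\delta(x):=\delta_\alpha(x\sqrt E)$ bounded and Lipschitz on $[0,\infty)$; after interchanging summations, the sums $\delta\sum_k(\delta_k-\delta)\sum_j(k-j)^{-2}$ and the symmetric partner in $j$ become Riemann sums converging to finite integrals of the form $\int_0^\infty(\bar\delta(1+t)-\delta)/(t(1+t))\,dt$, in which the weight $1/(t(1+t))$ is integrable at $t=0$ against the Lipschitz bound $\bar\delta(1+t)-\delta=O(t)$ and decays like $1/t^2$ at infinity.

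The principal obstacle is uniformity in the arbitrary thermodynamic limit; the key input is Lemma \ref{lemma:thermolimitd3}, which gives $\lambda_{N_L^0(E)}^L(0)\to E$ and hence $\delta_{N_L\pm r}\to\delta$ uniformly for $r=o(N_L)$. A secondary subtlety arises in the attractive case, where the $j=1$ row of the product involves factors $(\mu_k^L-\lambda_1^L)/(\lambda_k^L-\lambda_1^L)$ and $(\lambda_k^L+|\mu_1^L|)/(\mu_k^L+|\mu_1^L|)$ whose logarithms produce individual tails of order $\mp 2\delta_k/(k\pi)$ respectively, each of which sums to a divergent $\sim\ln L$ series; however, these tails carry opposite signs and cancel exactly, leaving an absolutely convergent $O(1)$ remainder of order $k^{-2}$ that is absorbed into the overall $\oh(\ln L)$ error.
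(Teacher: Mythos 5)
Your proposal is correct and shares the paper's skeleton --- the product representation of Theorem \ref{prod:thm1} applied to shifted resolvents, the transcendental equation $\sqrt{\mu_n}L=n\pi-\delta_\alpha(\sqrt{\mu_n})$ of Lemma \ref{prod:le1}, and a separate treatment of the $j=1$ row for $\alpha<0$ --- but the asymptotic evaluation at the heart of the argument is genuinely different. The paper first replaces $\delta_\alpha(\sqrt{\mu_n})$ by $\delta_\alpha(\sqrt{\lambda_n})$ (Lemma \ref{prod:le2}), truncates the $k$-sum at $2N$ (Lemma \ref{prod:lemma:cut}), converts the surviving double sum into a double integral (Lemma \ref{prod:lemma_2}) and extracts $\ln L$ from $\int\int (y-x)^{-2}$. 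You instead factor each term exactly into a difference factor $1-\delta_j\delta_k/\bigl((k-j)\pi((k-j)\pi+\delta_j-\delta_k)\bigr)$ and a sum factor, observe that the sum factors contribute $\Oh(1)$ since $\sum_{j\le N<k}(k+j)^{-2}=\Oh(1)$, reduce the difference factors to $-\pi^{-2}\sum\delta_j\delta_k(k-j)^{-2}$ up to summable tails, and then extract the logarithm from the counting identity $\sum_{M\ge1}\min(M,N)M^{-2}=\ln N+\Oh(1)$. This is more elementary (no truncation or Riemann-sum lemma is needed) and makes the origin of the $\ln L$ more transparent; the paper's route keeps every intermediate error explicit, which is what supports its remark that the $\oh(1)$ in the exponent genuinely depends on the thermodynamic limit.

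Two small imprecisions, neither fatal. First, the correction terms in your decomposition $\delta_j\delta_k=\delta^2+\dotsb$ are not $\Oh(1)$ for an \emph{arbitrary} thermodynamic limit: since $\delta_k-\delta$ contains the piece $\delta_\alpha(N\pi/L)-\delta_\alpha(\sqrt E)=\oh(1)$, which decays arbitrarily slowly, the term $\delta\sum_{j,k}(\delta_k-\delta)(k-j)^{-2}$ is only $\Oh(1)+\oh(\ln L)$; this still yields the claimed $L^{-\zeta(E)+\oh(1)}$, and you correctly identify Lemma \ref{lemma:thermolimitd3} as the input, but the bookkeeping should read $\oh(\ln L)$ rather than $\Oh(1)$ (compare the paper's passage from \eqref{1.} to \eqref{2.}). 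Second, in the $j=1$ row for $\alpha<0$ the two tails do not cancel \emph{exactly}: the log of $(\lambda_k+|\mu_1|)/(\mu_k+|\mu_1|)$ is $\sim 2k\pi\delta_k/\bigl((k\pi)^2+L^2|\mu_1|\bigr)$, not $2\delta_k/(k\pi)$, so the residual is of order $L^2|\mu_1|\delta_k/k^3$ rather than $k^{-2}$; it is still summable to $\Oh(1)$ over $k>N\sim L$, and here the uniform lower bound \eqref{delta:unif:below} on $\mu_1$ is the ingredient that must be cited, exactly as in the paper's estimate of $A_L^N$.
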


From now we shorten the notation and drop the $0$ and $L$-index of the eigenfunctions and eigenvalues.

Apart from the product representation discussed in Section \ref{sec:prod} 
the main ingredient to the proof of Theorem \ref{main:thm:half-axis} is a elementary formula expressing the
non-negative eigenvalues of the perturbed operator $h^0_{\alpha,L}$ in terms of the eigenvalues
of the operator $h^0_L$ plus corrections depending on the scattering phase shift $\delta_\alpha$. 
First, note that the eigenvalues of $h^0_L$ can be computed explicitly, see \cite{MR0493421}, i.e. for $n\in\N$
\begin{equation}\label{delta:eigenvalue-unperturb}
 \lambda_n=\left(\frac{n\pi} L\right)^2.
\end{equation}

\begin{lemma}\label{prod:le1}
Let $\delta_\alpha$ be given by Definition \ref{delta:def:phase}. Then,
\begin{enumerate}
 \item [(i)] for $\alpha\ge 0$ and $n\in\N$ the $n$th eigenvalues of $h^0_L$ and $h^0_{\alpha,L}$ satisfy
	\begin{equation}\label{delta:self-consistent}
	    0\le \sqrt{\mu_n} =\sqrt{\lambda_n} - \frac{\delta_\alpha(\sqrt{\mu_n})} L,
	\end{equation}
 \item [(ii)] for $\alpha\le 0$ and $n>1$ 	 
	      the $n$th eigenvalues of $h^0_L$ and $h^0_{\alpha,L}$ satisfy
	\begin{equation}\label{delta:self-consistent2}
	    0\le \sqrt{\mu_n} =\sqrt{\lambda_n} - \frac{\delta_\alpha(\sqrt{\mu_n})} L,
	\end{equation}
 \item [(iii)] and $\delta$ exhibits the following expansion
    \begin{equation}\label{delta:def:expansion}
     \delta_\alpha(\sqrt{\mu_n})
     =
     \delta_\alpha(\sqrt{\lambda_n}) - 
     \frac{\delta_\alpha'(\sqrt{\lambda_n}) \delta_\alpha(\sqrt{\lambda_n})} L + \oh\Big(\frac 1 L\Big),
    \end{equation}
    which is valid for all $\mu_n\ge 0$, and 
    the error term depends on $\alpha$ but is independent of $n$. 
\end{enumerate}
\end{lemma}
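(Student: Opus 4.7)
The approach is direct computation: I explicitly solve the eigenvalue problem for $h^0_{\alpha,L}$, read off the quantisation condition in terms of $\delta_\alpha$, and then obtain (iii) by Taylor expansion.

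For parts (i) and (ii), any eigenfunction of $h^0_{\alpha,L}$ at eigenvalue $\mu>0$ has the form $f(x)=A\sin(\sqrt{\mu}\,x)+B\cos(\sqrt{\mu}\,x)$. The Robin-type condition $-4\pi\alpha f(0+)+f'(0+)=0$ forces $A\sqrt{\mu}=4\pi\alpha B$, and Dirichlet at $L$ then yields the secular equation $\tan(\sqrt{\mu}\,L)=-\sqrt{\mu}/(4\pi\alpha)$. By Definition \ref{delta:def:phase}, the right-hand side equals $-\tan\delta_\alpha(\sqrt{\mu})$ in both sign cases (for $\alpha<0$ via $\tan(\pi-\theta)=-\tan\theta$, and for $\alpha=0$ via the convention $\arctan(k/0)=\pi/2$, which reduces to Neumann boundary conditions). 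Hence the positive spectrum is characterised by $\sqrt{\mu}\,L+\delta_\alpha(\sqrt{\mu})\in\pi\mathbb{Z}$.

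To identify the index, consider first $\alpha\ge 0$: since $\delta_\alpha([0,\infty))\subset[0,\pi/2]$ and $k\mapsto kL+\delta_\alpha(k)$ is strictly increasing, its $n$th preimage in $\pi\mathbb{N}$ is $n\pi$, giving $\sqrt{\mu_n}\,L=n\pi-\delta_\alpha(\sqrt{\mu_n})=\sqrt{\lambda_n}\,L-\delta_\alpha(\sqrt{\mu_n})$ by \eqref{delta:eigenvalue-unperturb}, together with $\sqrt{\mu_n}\ge(n-1/2)\pi/L>0$; this proves (i). For $\alpha<0$ the operator has exactly one bound state $\mu_1=-(4\pi\alpha)^2<0$ (for $L$ large enough), and $\delta_\alpha(0)=\pi$ with $\delta_\alpha([0,\infty))\subset[\pi/2,\pi]$. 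A short count of solutions of $\tan(\sqrt{\mu}\,L)=\sqrt{\mu}/(4\pi|\alpha|)$ in the branches where $\tan>0$ shows that $\sqrt{\mu^{\mathrm{pos}}_k}\,L\in(k\pi,(k+1/2)\pi)$ for $k\ge 1$, so the $k$th positive eigenvalue satisfies $\sqrt{\mu^{\mathrm{pos}}_k}\,L=(k+1)\pi-\delta_\alpha(\sqrt{\mu^{\mathrm{pos}}_k})$. Because of the bound state, $\mu_n=\mu^{\mathrm{pos}}_{n-1}$ for $n\ge 2$ in the total indexing, which gives exactly $\sqrt{\mu_n}\,L+\delta_\alpha(\sqrt{\mu_n})=n\pi$, proving (ii). The unit shift of the positive-spectrum index, absorbed into the jump $\delta_\alpha(0)=\pi$ by Levinson's theorem, is the only real bookkeeping point of the proof.

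Part (iii) is then an application of Taylor's theorem. From (i)/(ii) and $|\delta_\alpha|\le\pi$ the difference $\sqrt{\mu_n}-\sqrt{\lambda_n}=-\delta_\alpha(\sqrt{\mu_n})/L$ is $\mathrm{O}(1/L)$ uniformly in $n$. Expanding $\delta_\alpha$ at $\sqrt{\lambda_n}$ to second order yields $\delta_\alpha(\sqrt{\mu_n})=\delta_\alpha(\sqrt{\lambda_n})+\delta'_\alpha(\sqrt{\lambda_n})(\sqrt{\mu_n}-\sqrt{\lambda_n})+\mathrm{O}((\sqrt{\mu_n}-\sqrt{\lambda_n})^2)$, and substituting the one-step iterate $\sqrt{\mu_n}-\sqrt{\lambda_n}=-\delta_\alpha(\sqrt{\lambda_n})/L+\mathrm{O}(1/L^2)$ into this identity produces the expansion claimed in (iii) with remainder $\mathrm{O}(1/L^2)$. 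Uniformity in $n$ follows from the global boundedness of $\delta_\alpha$, $\delta'_\alpha$ and $\delta''_\alpha$ on $[0,\infty)$, which is immediate from the explicit formulas in Definition \ref{delta:def:phase}.
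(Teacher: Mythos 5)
Your proposal is correct and follows essentially the same route as the paper: solve the ODE explicitly, derive the secular equation $\sqrt{\mu}\,L+\delta_\alpha(\sqrt{\mu})\in\pi\mathbb{Z}$ (the paper phrases this via Pr\"ufer variables and a shifted sine, which is equivalent to your $A\sin+B\cos$ computation), use strict monotonicity of $k\mapsto kL+\delta_\alpha(k)$ to index the eigenvalues, and absorb the index shift caused by the single bound state for $\alpha<0$ into the $\pi$-offset in the definition of $\delta_\alpha$. Part (iii) by Taylor expansion with uniform remainder is exactly what the paper leaves implicit, and your version makes it explicit correctly.
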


\begin{proof}
 Let $k>0$.
 Consider the eigenvalue problem
    \begin{equation}\label{delta:eq:finite-size}
      -u_k''= k^2 u_k, \qquad -4\pi\alpha u_k(0+) + u_k'(0+)=0.
    \end{equation}
    Introducing Pr\"ufer variables 
    \begin{equation}\label{delta:prufer}
     u_k(x)= \rho_u(x) \sin(\theta_k(x)) \qquad u_k'(x)=k \rho_u(x)\cos(\theta_k(x)),
    \end{equation}
    we see that any non-zero solution of \eqref{delta:eq:finite-size} is of the form
    \begin{equation}\label{eigenfct:Pruefer}
     u_k(x):= a\sin\Big(kx+\arctan\Big(\frac k {4\pi\alpha}\Big) \Big), 
    \end{equation}
    for some $0\neq a\in\C$.
    Since any eigenfunction $u_k$ to an eigenvalue $k^2$ of $h^0_{\alpha,L}$ is a solution of
    \eqref{delta:eq:finite-size} in $(0,L)$ and additionally satisfies $u_k(L-)=0$,
    we obtain that %for all eigenvalues $k^2$
    \begin{equation}\label{delta:eq:finite-size:finite}
      u_{k}(L)=a\sin\Big(kL+\arctan\Big(\frac {k} {4\pi\alpha}\Big) \Big) = 0.
    \end{equation}
    On the other hand, all $k^2$ such that \eqref{delta:eq:finite-size:finite} is satisfied
    are eigenvalues of $h^0_{\alpha,L}$. 
    Since the function $k\mapsto kL+\arctan\left(\frac {k} {4\pi\alpha}\right) $ is strictly increasing
    we obtain for any $n\in\N$ a unique eigenvalue $\mu_n\ge 0$ of $h^0_{\alpha.L}$ such that 
     \begin{equation}\label{phaseshift-eq1}
     \sqrt{\mu_n}L+\arctan\Big(\frac {\sqrt{\mu_n}} {4\pi\alpha}\Big) = n\pi,
    \end{equation}
    where $\mu_1<\mu_2<\cdots$. 
    This proves (i). For the case $\alpha<0$ note that $h^0_{\alpha,L}$ admits a single negative eigenvalue.
    Therefore, \eqref{phaseshift-eq1} is only valid starting from the second eigenvalue of $h^0_{\alpha,L}$.
    This implies for all $n\in\N$
    \begin{equation}
     \sqrt{\mu_{n+1}}=  \sqrt{\lambda_n} - \frac{\arctan\big(\frac{\sqrt{\mu_{n+1}}}{4\pi\alpha}\big)}L 
     =
     \sqrt{\lambda_{n+1}} - \frac{\pi -\arctan\big(\frac{\sqrt{\mu_{n+1}}}{4\pi|\alpha|}\big)}L.
    \end{equation}
   (iii) follows directly from (i), (ii) and 
   Definition \eqref{delta:def:phase} from the phase shift. 
\end{proof}
\begin{corollary}\label{prod:delta:le1}
 The eigenvalues of $h^0_L$ and $h^0_{\alpha,L}$ satisfy
 \begin{equation}
  \mu_1 < \lambda_1 < \mu_2 < \lambda_2<\cdots.
 \end{equation}
\end{corollary}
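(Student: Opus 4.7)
The plan is to combine the self-consistent identities of Lemma \ref{prod:le1} with the explicit unperturbed eigenvalues $\sqrt{\lambda_n}=n\pi/L$ from \eqref{delta:eigenvalue-unperturb} and the range of $\delta_\alpha$ dictated by Definition \ref{delta:def:phase}. It is natural to split the argument along whether the perturbation is repulsive or attractive, handle the negative eigenvalue separately in the attractive case, and in each case simply read off the position of $\sqrt{\mu_n}L$ between two consecutive multiples of $\pi$.

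For $\alpha\ge 0$, Definition \ref{delta:def:phase} gives $\delta_\alpha(k)\in(0,\pi/2]$ for all $k>0$ (equality only in the Neumann case $\alpha=0$). Rewriting \eqref{delta:self-consistent} as $\sqrt{\mu_n}\,L=n\pi-\delta_\alpha(\sqrt{\mu_n})$, one obtains
\[
\bigl(n-\tfrac12\bigr)\pi \;\le\; \sqrt{\mu_n}\,L \;<\; n\pi.
\]
Since $(n-1)\pi<(n-\tfrac12)\pi$ and $\sqrt{\lambda_{n-1}}L=(n-1)\pi$, $\sqrt{\lambda_n}L=n\pi$, this yields the strict chain $\lambda_{n-1}<\mu_n<\lambda_n$ for every $n\ge 1$ (with $n=1$ giving $0<\mu_1<\lambda_1$).

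For $\alpha<0$, the single negative eigenvalue of $h^0_{\alpha,L}$ gives immediately $\mu_1=-(4\pi\alpha)^2<0<\lambda_1$. For $n\ge 2$, one has $\delta_\alpha(k)=\pi-\arctan\bigl(k/(4\pi|\alpha|)\bigr)\in(\pi/2,\pi)$, and \eqref{delta:self-consistent2} therefore forces
\[
(n-1)\pi \;<\; \sqrt{\mu_n}\,L \;<\; \bigl(n-\tfrac12\bigr)\pi,
\]
from which $\lambda_{n-1}<\mu_n<\lambda_n$ follows exactly as before. Concatenating the two chains produces $\mu_1<\lambda_1<\mu_2<\lambda_2<\cdots$ in both cases.

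There is essentially no hard step here: Lemma \ref{prod:le1} has already performed all of the spectral analysis, so the corollary reduces to bookkeeping with the explicit bounds on $\delta_\alpha$. The only mild point requiring attention is strictness of the inequalities, which follows from $\delta_\alpha$ never attaining the boundary values $0$ or $\pi$ in the relevant regime, together with the observation that even the borderline Neumann case $\alpha=0$, $\delta_0\equiv\pi/2$, produces eigenvalues $\sqrt{\mu_n}L=(n-\tfrac12)\pi$ strictly between consecutive Dirichlet ones.
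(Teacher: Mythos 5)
Your argument is correct and is essentially the paper's own proof: the paper simply notes that $|\delta_\alpha(k)|<\pi$ and combines \eqref{delta:eigenvalue-unperturb} with the self-consistent equations of Lemma \ref{prod:le1}, which is exactly the bookkeeping you carry out in more detail (your sharper bounds $\delta_\alpha\in(0,\pi/2]$ resp.\ $(\pi/2,\pi)$ are not needed beyond $0<\delta_\alpha<\pi$). One tiny inaccuracy: for finite $L$ one only has $\mu_1\ge-(4\pi\alpha)^2$ together with $\mu_1<0$ (the value $-(4\pi\alpha)^2$ is attained only in the infinite-volume limit), but all you actually use is $\mu_1<0<\lambda_1$, so the proof stands.
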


\begin{proof}
 Note that $|\delta_\alpha(k)| < \pi $ for all $k>0$. 
 Thus, \eqref{delta:eigenvalue-unperturb} and \eqref{delta:self-consistent2}
 imply the corollary.
\end{proof}

Next we apply the results from Section \ref{sec:prod} to the determinant:

\begin{lemma}\label{prod:thm2}
Let $N\in\N$. Then,
 \begin{equation}
    \Big|  \det\Bigparens{\<\varphi_j, \psi_k\>}_{1\le j,k\le N}\Big|^2=
    \prod_{j=1}^N\prod_{k=N+1}^\infty\frac{\left|\mu_k-\lambda_j\right|\left|\lambda_k-\mu_j\right|}
    {\left|\lambda_k-\lambda_j\right|\left|\mu_k-\mu_j\right|}.
 \end{equation}
\end{lemma}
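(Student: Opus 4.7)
The plan is to deduce Lemma~\ref{prod:thm2} from the abstract product formula Theorem~\ref{prod:thm1} by passing to the resolvents of $h^0_L$ and $h^0_{\alpha,L}$, since the abstract result is formulated for positive compact operators while our Schr\"odinger operators are unbounded.

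First I would fix a real number $z$ strictly below the spectra of both $h^0_L$ and $h^0_{\alpha,L}$; for $\alpha\ge 0$ any $z<0$ works, while for $\alpha<0$ the uniform lower bound \eqref{delta:unif:below} allows the choice $z<-(4\pi\alpha)^2$ independently of~$L$. Setting $A:=(h^0_{\alpha,L}-z)^{-1}$ and $B:=(h^0_L-z)^{-1}$, one obtains two positive, compact, self-adjoint operators on $L^2((0,L))$ that differ by a rank-1 operator by \eqref{delta:lemma:rank-1}. Their eigenvectors are precisely $\psi_n$ and $\varphi_n$, and in decreasing order their eigenvalues become $\alpha_n=(\mu_n-z)^{-1}$ and $\beta_n=(\lambda_n-z)^{-1}$. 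The right-hand side of Theorem~\ref{prod:thm1} is manifestly symmetric under the exchange $(\alpha_n)\leftrightarrow(\beta_n)$, and $|\det(\<\varphi_j,\psi_k\>)|^2$ is likewise invariant under $\varphi\leftrightarrow\psi$, so the theorem applies regardless of which of the two resolvents plays the role of the ``$A$'' in its statement.

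Next I would verify the two standing hypotheses. The strict interlacing $\mu_1<\lambda_1<\mu_2<\lambda_2<\cdots$ established in Corollary~\ref{prod:delta:le1}, together with the fact that $z$ lies below both sequences, translates directly into the strict interlacing \eqref{determinant:assumption} of $\alpha_n$ and $\beta_n$. For the summability assumption \eqref{determinant:assumption2}, Lemma~\ref{prod:le1} gives $|\sqrt{\lambda_n}-\sqrt{\mu_n}|=\Oh(1/L)$ uniformly in $n$, whence $|\lambda_n-\mu_n|=\Oh(\sqrt{\lambda_n}/L)=\Oh(n/L^2)$; combined with $\lambda_n=(n\pi/L)^2$ this yields $|\beta_n-\alpha_n|=\Oh(L^2/n^3)$, and in particular $\sum_n|\beta_n-\alpha_n|<\infty$ for every fixed~$L$.

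Applying Theorem~\ref{prod:thm1} with these identifications and then rewriting each individual factor as
$\beta_k-\alpha_j=(\mu_j-\lambda_k)/[(\lambda_k-z)(\mu_j-z)]$, and analogously for the three remaining types of differences, one sees that every factor in numerator and denominator of the product contributes four $z$-dependent terms of the form $(\lambda_m-z)^{\pm 1}$ or $(\mu_m-z)^{\pm 1}$. A direct bookkeeping check within each $(j,k)$-factor shows that these $z$-dependent prefactors appear with exactly the same multiplicities in numerator and denominator and therefore cancel, leaving precisely the claimed product in terms of $|\mu_k-\lambda_j|$, $|\lambda_k-\mu_j|$, $|\lambda_k-\lambda_j|$, and $|\mu_k-\mu_j|$. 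The only genuine obstacle is this cancellation; no new analytic estimates beyond the interlacing and eigenvalue asymptotics are required.
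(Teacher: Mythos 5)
Your proof follows essentially the same route as the paper's: pass to the resolvents at a real point below both spectra, use the rank-one resolvent difference \eqref{delta:lemma:rank-1} together with the interlacing of Corollary \ref{prod:delta:le1} to verify the hypotheses of Theorem \ref{prod:thm1}, and observe that the shift-dependent factors cancel in the resulting product. Your explicit check of the summability condition \eqref{determinant:assumption2} via $|\beta_n-\alpha_n|=\Oh(L^2/n^3)$, and your remark on the symmetry of the product under exchanging the roles of the two resolvents, are in fact spelled out more carefully than in the paper, which essentially asserts these points.
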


\begin{proof}
First note that $h^0_{\alpha,L}$ is bounded from below by \eqref{delta:lemma:rank-1}.
This and $h^0_L\ge 0$ 
%The inequality \eqref{delta:ineq:oper:eq1} 
imply
 $-E\in\rho(h_L)\cap\rho(h^0_{\alpha,L})$ for some $E>0$. Moreover,  \eqref{delta:lemma:rank-1} 
 provides
 \begin{equation}
  \frac 1  {h^0_L+E} - \frac 1 {h^0_{\alpha,L}+E}= \big|\eta_L^{E,\alpha}\big\>\big\<\eta_L^{E,\alpha}\big|,
 \end{equation}
 for some $\eta^L_E\in L^2((0,L))$  % and
and Corollary \ref{prod:delta:le1} gives
\begin{equation}
 \frac 1 {\mu_1+ E} > \frac 1 {\lambda_1+ E} > \frac 1 {\mu_2+ E} > \frac 1 {\lambda_2+ E} > \dotsb,
\end{equation}
the eigenvalues satisfy assumption \eqref{determinant:assumption2}. 
Furthermore, the operators $\frac 1 {h^0_L+E}$ and $\frac 1 {h^0_{\alpha,L}+E}$ are non-negative, have a kernel consisting of $0$ only and are compact. 
Therefore, we are in position to apply
 Theorem \ref{prod:thm1} and obtain
\begin{align}
  \Big|  \det\Bigparens{\<\varphi_j, \psi_k\>}_{1\le j,k\le N}  \Big|^2 &= 
  \prod_{j=1}^N\prod_{k=N+1}^\infty 
  \frac{\big|\frac 1 {\mu_k+E}-\frac 1 {\lambda_j+E}\big|\big|
  \frac 1 {\lambda_k+E}- \frac 1 {\mu_j+E}\big|}{\big|\frac 1 {\lambda_k+E}-\frac 1 {\lambda_j+E}
  \big|\big|\frac 1 {\mu_k+E} - \frac 1 {\mu_j+E}\big|}\nonumber\\
 &= 
 \prod_{j=1}^N\prod_{k=N+1}^\infty\frac{\left|\mu_k-\lambda_j\right|\left|\lambda_k-\mu_j\right|}
 {\left|\lambda_k-\lambda_j\right|\left|\mu_k-\mu_j\right|}.	  
\end{align}
\end{proof}

\begin{proof}[Proof of Theorem \ref{main:thm:half-axis}]
We start with the product representation given in Lemma \ref{prod:thm2}. 
Note that for $\alpha<0$ there is an ambiguity since there exists precisely one negative eigenvalue $\mu_1$. 
Therefore, we treat the $j=1$ term in the product separately. We define
\begin{equation}
 A_L^N:=
 \prod_{k=N+1}^\infty 
 \frac{\left|\mu_k-\lambda_1\right|\left|\lambda_k-\mu_1\right|}{\left|\lambda_k-\lambda_1\right|\left|\mu_k-\mu_1\right|}
 =
 \prod_{k=N+1}^\infty \Big|1+ 
 \frac{(\mu_k-\lambda_k)(\lambda_1-\mu_1)}{(\lambda_k-\lambda_1)(\mu_k-\mu_1)} \Big|
\end{equation}
and estimate using Corollary \ref{prod:delta:le1}
\begin{align}
 \sum_{k=N+1}^\infty \Big| \frac{(\mu_k-\lambda_k)(\lambda_1-\mu_1)}{(\lambda_k-\lambda_1)(\mu_k-\mu_1)} \Big|
 &\le |\lambda_1-\mu_1|\sum_{k=N+1}^\infty \frac{ \Big(\left(\frac{k\pi}L\right)^2 - \big(\frac{(k-1)\pi}L\big)^2\Big)}
						{ \left(\left(\frac{k\pi}L\right)^2 - \left(\frac{\pi}L\right)^2\right)
						  \left( \big(\frac{(k-1)\pi}L\big)^2 -\left(\frac{\pi}L\right)^2\right)}\nonumber\\
 &\le  
 \frac{L^2}{\pi^2}|\lambda_1-\mu_1|\sum_{k=N+1}^\infty \frac{(2k-1)}{(k^2-1)(k^2-2k)}\nonumber\\
 &\le
 c \Big(\frac{L}{N}\Big)^2 |\lambda_1-\mu_1|.
\end{align}
Since $h_L^\alpha$ is uniformly bounded from below with respect to $L$, see Lemma \ref{delta:unif:below},
\begin{equation}\label{j=1term}
 \ln A_L^N = \ln\bigg(\prod_{k=N+1}^\infty 
 \frac{\left|\mu_k-\lambda_1\right|\left|\lambda_k-\mu_1\right|}{\left|\lambda_k-\lambda_1\right|\left|\mu_k-\mu_1\right|}\bigg)
 = O(1)
\end{equation}
as $N,L\to\infty$ and $\frac N L\to \rho(E)>0$. 
Therefore, we are left with a product consisting of the non-negative eigenvalues and
 apply Lemma \ref{prod:thm2}, use Lemma \ref{prod:le1} (i) and 
 $\sqrt{\lambda_n}=\frac {n\pi} L$, $n\in \N$, to obtain
 \begin{align}
   &\ln\Big|  \det\Bigparens{\<\varphi_j, \psi_k\>}_{1\le j,k\le N}\Big|^2
   =
    \ln A_L^N\notag \\
    & +
   \sum_{j=2}^N\sum_{k=N+1}^\infty \ln\bigg( \frac{\big|\left(k\pi- \delta_\alpha(\sqrt{\mu_k}) \right)^2- 
   \left(j\pi\right)^2\big|\big|\left((k\pi)\right)^2-\left(j\pi  - 
   \delta_\alpha(\sqrt{\mu_j})\right)^2\big|}{\big|\left(k\pi\right)^2-\left(j\pi\right)^2\big|
   \big|\left(k\pi - \delta_\alpha(\sqrt{\mu_k})\right)^2- \left(j\pi- \delta_\alpha(\sqrt{\mu_j})\right)^2\big|} \bigg) 
   \label{prod:eq1}.
   \end{align}
   In the following the $\Oh(1)$ and $\oh(1)$ terms refer to the asymptotics $L,N\to\infty$, $N/L\to \rho_0(E)>0$. 
Equation \eqref{j=1term} above, Lemma \ref{det:lemma:e1} below 
and the abbreviation $g_k:=-\frac 1 \pi \delta_\alpha(\sqrt{\mu_k})$ for $k\in\N$ yield
\begin{align}
 \eqref{prod:eq1}
                   = -\sum_{j=2}^N\sum_{k=N+1}^\infty \frac{\big(2j g_j + g_j^2\big)\left(2k g_k+ g_k^2\right)}
                            {\big( \left(k  + g_k\right)^2- \left(j  +g_j \right)^2\big) \big(k^2-j^2\big)}
                            +\Oh\left(1\right).    \label{prod:eq3}
\end{align}
Using Lemma \ref{prod:le2} and the abbreviation $\delta_k:=-\frac 1 \pi \delta_\alpha(\sqrt{\lambda_k})$ for $k\in\N$, we have 
\begin{align}
 \eqref{prod:eq3}=-\sum_{j=2}^N\sum_{k=N+1}^\infty \frac{\big(2j \delta_j + \delta_j^2\big)\left(2k \delta_k+ \delta_k^2\right)}
                            {\big( \left(k  + \delta_k\right)^2- \left(j  +\delta_j \right)^2\big) \big(k ^2-j^2\big)}
                            +\Oh\left(1\right).  \label{prod:eq4}
\end{align}
Lemma \ref{prod:lemma:cut} implies
\begin{align}
 \eqref{prod:eq4}= -\sum_{j=2}^N\sum_{k=N+1}^{2N} \frac{4jk\delta_j\delta_k}{\big(k^2-j^2\big)^2}+ \Oh(1) .\label{prod:thm:eq1}
\end{align}
 Lemma \ref{prod:lemma_2} yields
\begin{align}
 \eqref{prod:thm:eq1} = -\frac 1 {\pi^2}\int_0^{\frac N L}\d x\, \int_{\frac{N+1} L}^{\frac{2N} L} \d y \,
 \frac{4xy\delta_\alpha(x\pi)\delta_\alpha(y\pi)}{(y^2-x^2)^2} + \Oh(1). \label{prod:thm:eq2}
\end{align}
We define for $0\le x< y$
\begin{equation}
g(x,y):= \frac{4xy\delta_\alpha(\pi x)\delta_\alpha(\pi y)}{(y+x)^2}
		%  -\delta_\alpha^2\left(\frac N L\right)
\end{equation}
The explicit representation of $\delta_\alpha$ implies for all $\epsilon>0$
\begin{equation}
\sup_{b>\epsilon}\, \sup_{(x,y)\in(0,b)\times (b,\infty)}\, \norm{(\nabla g)(x,y)}_2:=c(\epsilon)< \infty.
\end{equation}
Therefore, using the mean value theorem and the Cauchy-Schwarz inequality, we compute
for a $0<\epsilon< \sqrt E$ and $N,L$ big enough
\begin{align}
   &\int_0^{\frac N L} \d x \int_{\frac N L+ \frac 1 L}^{\frac{2N} L} \d y\,
   \Big|\frac{4xy\delta_\alpha(x\pi)\delta_\alpha(y\pi)}{(y+x)^2} - \delta_\alpha^2\left( N/L\right)\Big|\frac 1 {(y-x)^2}\notag\\
   \le & c(\epsilon) \int_0^{\frac N L} \d x \int_{\frac N L+ \frac 1 L}^{\frac{2N} L} \d y\,
   \bignorm{ \big(N/L- x, y-  N/ L\big)}_2 \frac 1 {(y-x)^2}\notag\\
   \le & 2c(\epsilon)  \int_0^{\frac N L} \d x \int_{\frac N L+ \frac 1 L}^{\frac{2N} L} \d y\,
   \frac 1{(y-x)} = \Oh(1),   \label{prod:thm:eq5}
\end{align}
where we used the inequality
\begin{equation}
\frac{\big| x - N/L\big| + \big| y - N/L\big| }{(y-x)^2}\le 2 \frac 1 {(y-x)},
\end{equation}
which is valid for all $x<N/L<y$. Moreover, since $\frac N L \to \frac{\sqrt E} \pi>0$, 
we compute
\begin{equation}
 \int_0^{\frac N L} \d x \int_{\frac N L+ \frac 1 L}^{\frac {2N} L} \d y \frac 1 {(y-x)^2} = \ln L + \Oh(1).\label{112}
 %\ln\left(\frac{N+1}{2L}\right) \label{112}
\end{equation}
%and set $K_L:=\ln\left(\frac{N+1}{2L}\right)$. 
Hence, combining equation \eqref{prod:thm:eq5} and \eqref{112}, we end up with
\begin{align}
\eqref{prod:thm:eq2}&=  -\ln L \frac 1 {\pi^2} \delta_\alpha^2(\pi N/L) + \Oh(1)\label{1.}\\
                		       &=  -\ln L \frac 1 {\pi^2} \delta_\alpha^2(\sqrt E) + \oh(\ln L)\label{2.},
\end{align}
where the last line follows from $\pi \frac N L\to \sqrt E$.
This gives the assertion.
\end{proof}

\begin{appendix}

\section{Proof of the auxiliary lemmata}
In this section we prove the missing lemmata
used in the proof of Theorem \ref{main:thm:half-axis}. 
We do not claim to give optimal or very elegant estimates. 
%We provide sufficient estimates to obtain Theorem \ref{prod:thm1}.
Throughout this section we 
drop the index $\alpha$ in the scattering phase shift and
restrict ourselves to the case $\alpha<0$. 
This implies the following estimate on the phase shift
\begin{align}\label{aux:pf}
 \quad \delta(x)-\delta(y) \ge 0,
\end{align}
for $x<y$,
which we use in the sequel.
The case $\alpha\ge 0$ is even simpler since
in that case the
 Definition \eqref{delta:def:phase} of the phase shift implies
the uniform bound
\begin{equation}
\norm{\delta}_\infty\le \frac \pi 2,
\end{equation}
which simplifies some of the following estimates. 
Moreover, we use the elementary asymptotics
\begin{align}
 \displaystyle\sum_{j=1}^N\sum_{k=N+1}^\infty \frac 1 {(k-j)^2}=\Oh(\ln N),\label{RR}\\%\ as $N\to\infty$\
 \displaystyle\sum_{j=1}^N\sum_{k=N+1}^\infty \frac 1 {(k-j)^\beta}=\Oh(1)\label{TT}
\end{align}
as $N\to\infty$, where $\beta>2$. 

\begin{lemma}\label{det:lemma:e1}
Set $g_k:=-\frac 1 \pi \delta(\sqrt{\mu_k})$ for $k\in\N$. Then,
 \begin{align}
     &\sum_{j=2}^N\sum_{k=N+1}^\infty \ln\bigg( \frac{\big(\left(k + g_k\right)^2- j^2\big)\big(k^2-\left(j + g_j\right)^2\big)}
	      {\big(\left(k + g_k \right)^2- \left(j + g_j\right)^2\big)\big(k^2-j^2\big)} \bigg)\label{det:le:eq1} \\
   = & -\sum_{j=2}^N\sum_{k=N+1}^\infty \frac{\big(2j g_j + g_j^2\big)\left(2k g_k+ g_k^2\right)}
                            {\big( \left(k +g_k\right)^2- \left(j +g_j\right)^2\big) \left(k^2-j^2\right)}
                            +\Oh(1)
 \end{align}
 as $N,L\to \infty$, $\frac N L\to \frac{\sqrt E} \pi$. 
\end{lemma}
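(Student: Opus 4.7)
The plan is to recognise the ratio inside the logarithm as $1-x_{jk}$ for a small positive quantity $x_{jk}$ and then Taylor expand $\ln(1-x_{jk}) = -x_{jk} + O(x_{jk}^2)$, so that the stated identity reduces to the bound $\sum_{j,k} x_{jk}^2 = \Oh(1)$.

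First, I would perform the algebraic identity by setting $u=(k+g_k)^2$, $v=k^2$, $p=j^2$, $q=(j+g_j)^2$ and expanding
\[
(u-p)(v-q) - (u-q)(v-p) = (p-q)(u-v) = -(2jg_j+g_j^2)(2kg_k+g_k^2).
\]
This immediately gives the quotient inside the logarithm as $1-x_{jk}$, with
\[
x_{jk} := \frac{(2jg_j+g_j^2)(2kg_k+g_k^2)}{((k+g_k)^2-(j+g_j)^2)(k^2-j^2)}.
\]
Since $g_n\in[-1,-1/2]$ yields $2ng_n+g_n^2 = g_n(2n+g_n)<0$ and interlacing (Corollary \ref{prod:delta:le1}) makes the denominator strictly positive, one has $x_{jk}\in(0,1)$, so the logarithm is unambiguous and negative.

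Next, I would extract a uniform pointwise bound. Factoring numerator and denominator gives
\[
x_{jk} = \frac{g_jg_k\,(2j+g_j)(2k+g_k)}{((k+g_k)-(j+g_j))((k+g_k)+(j+g_j))\,(k-j)(k+j)}.
\]
The combination $(2j+g_j)(2k+g_k)/((\tilde k+\tilde j)(k+j))$ (with $\tilde n = n+g_n$) is bounded uniformly by a constant $C_1$ depending only on $\alpha$, because $|g_n|\le 1$. Since $|g_k-g_j|\le 1$, for $k-j\ge 2$ one has $(\tilde k-\tilde j)\ge (k-j)/2$, hence $|x_{jk}|\le 2C_1/(k-j)^2 \le 1/2$ once $(k-j)$ is large enough. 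For the single exceptional index pair $j=N$, $k=N+1$, the convergence $\sqrt{\mu_N},\sqrt{\mu_{N+1}}\to \sqrt E$ and continuity of $\delta_\alpha$ force $x_{N,N+1}\to(\delta_\alpha(\sqrt E)/\pi)^2 < 1$, so $x_{jk}$ stays bounded away from $1$ uniformly in $L$.

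Finally, I would apply $|\ln(1-x)+x|\le x^2$ valid for $0\le x\le 1/2$, summed against the bound $x_{jk}^2\le 4C_1^2/(k-j)^4$ for the pairs with $k-j\ge 2$. The number of pairs with $k-j = m$ is at most $\min(m,N-1)$, so
\[
\sum_{\substack{2\le j\le N \\ k\ge N+1, k-j\ge 2}} x_{jk}^2 \;\le\; 4C_1^2\sum_{m\ge 2} \frac{\min(m,N-1)}{m^4} \;=\; \Oh(1).
\]
The lone exceptional pair contributes a bounded remainder by the previous paragraph. Combining these, $\sum_{j,k}[\ln(1-x_{jk})+x_{jk}] = \Oh(1)$, which is exactly the asserted identity. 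The principal obstacle is the second step: controlling the denominator when $k-j$ is small, which hinges on the crude but essential bound $|g_n|\le 1$ coming from the definition of the phase shift.
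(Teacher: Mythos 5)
Your proof is correct and essentially identical to the paper's: the same algebraic identity turns the ratio into $1-x_{jk}$, the same pointwise bound $|x_{jk}|\le C(k-j)^{-2}$ (via $|g_n|\le 1$) justifies the Taylor expansion of the logarithm, the $(j,k)=(N,N+1)$ term is handled separately via the limit $g_N\to-\delta_\alpha(\sqrt E)/\pi$, and the remainder is summed using $\sum(k-j)^{-4}=\Oh(1)$. The only point worth tightening is your hedge ``once $k-j$ is large enough'': the paper instead uses the monotonicity $g_k\ge g_j$ for $k>j$ (valid since $\delta_\alpha$ is monotone for $\alpha<0$) to get $(k+g_k)-(j+g_j)\ge k-j$ and hence $|x_{jk}|\le 2/(k-j)^2\le 1/2$ for \emph{every} pair with $k-j\ge 2$, so that no further exceptional pairs remain to be checked.
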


\begin{proof}
We prove the assertion in two steps. First we consider the $j=N$ and $k=N+1$ summand. 
Note that Lemma \ref{lemma:thermolimitd3} above and $E>0$ imply
\begin{equation}\label{det:le:eq3}
 \lim_{\substack{N,L\to\infty\\ N/L\to \sqrt E/ \pi}} g_N = \lim_{\substack{N,L\to\infty\\ N/L\to \sqrt E/ \pi}} g_{N+1}
 = - \frac{\delta(\sqrt E)}\pi> -1.
\end{equation}
Thus, for $j=N$ and $k=N+1$
\begin{align}
& \lim_{\substack{N,L\to\infty\\ N/L\to \sqrt E/ \pi}} 
  \ln\bigg( \frac{\big(\left(N+1 + g_{N+1}\right)^2- N^2\big)\big((N+1)^2-\left(N + g_N\right)^2\big)}
	      {\big(\left(N+1 + g_{N+1} \right)^2- \left(N + g_N\right)^2\big)\big((N+1)^2-N^2\big)} \bigg)\nonumber\\
&=
\lim_{\substack{N,L\to\infty\\ N/L\to \sqrt E/ \pi}} 
\ln\bigg(\frac{\big(1+g_{N+1}\big)\big(1-g_N\big)}{\big(1+g_{N+1}-g_N\big)}\frac{\big(2N+1+g_{N+1}\big)\big(2N+1+g_N\big)}{\big(2N+1+g_{N+1}+g_N\big)\big(2N+1)}\bigg)\nonumber\\
&= \ln\Big( 1- \frac{\delta^2(\sqrt E)}{\pi^2}\Big).
\end{align}
Moreover, along the same line using \eqref{det:le:eq3}
\begin{align}
 \lim_{\substack{N,L\to\infty\\ N/L\to \sqrt E/ \pi}}
 -\frac{\left(2N g_N + g_N^2\right)\left(2(N+1) g_{N+1}+ g_{N+1}^2\right)}
                            {\big( \left(N+1 +g_{N+1}\right)^2- \left(N +g_N\right)^2\big) \big((N+1)^2-N^2\big)}
 = -\frac{\delta^2(\sqrt E)}{\pi^2}.                            
\end{align}
Therefore, the $j=N$ and $k=N+1$ term is of order $1$. 

For $j\le N<N+1<k$ we want to apply the bound
\begin{equation}\label{det:le:eq2}
 \big|\ln(1+x)-x\big|\le\, \frac {x^2} 2\frac 1 {1-|x|}
\end{equation}
for $x\in\R$ with $|x|<1$, to $x=x_{jk}$ where % and define for $j\le N<N+1<k$
%We apply this formula to each summand $j\le N<N+1<k$ in \eqref{det:le:eq1} and define
\begin{equation}
 x_{jk}:= -\frac{\big(2j g_j + g_j^2\big)\big(2k g_k+ g_k^2\big)}
                            {\big( \left(k +g_k\right)^2- \left(j +g_j\right)^2\big) \big(k^2-j^2\big)}.
\end{equation}
We estimate using
$|g_n|\le 1$ for all $n\in\N$ 
and $g_k-g_j\ge 0$
\begin{align}
 |x_{jk}|
 &\le 
 \Big|\frac{(2j+g_j)(2k+g_k)}{(j+g_j+k+g_k)(k+j)}\Big|\Big|\frac 1 {(k-j+g_k-g_j)(k-j)}\Big|\nonumber\\
 &\le 2 \frac 1 {(k-j)^2}.\label{T}
\end{align}
Since $j\le N<N+1<k$, this implies in particular $|x_{jk}|\le \frac 1 2$, and we continue using
\eqref{det:le:eq2} and \eqref{T}
\begin{align}
 \sum_{j=1}^N\sum_{k=N+2}^\infty \big| \ln(1+ x_{jk})- x_{jk}\big|
 &\le 
 \sum_{j=1}^N\sum_{k=N+2}^\infty x_{jk}^2\notag\\
 &\le    \sum_{j=2}^N\sum_{k=N+1}^\infty 
   4 \Big(\frac 1 {k-j}\Big)^4 = \Oh(1),
   \label{det:eq4}
\end{align}
as $N\to\infty$,
where we used \eqref{TT} in the last line. 
\end{proof}

\begin{lemma}\label{prod:le2}
Define $\delta_k:=-\frac 1 \pi \delta(\sqrt{\lambda_k})$ for $k\in\N$. Then,
\begin{align}
 \sum_{j=2}^N\sum_{k=N+1}^\infty \bigg|&\frac{\big(2j \delta_j+ \delta_j^2\big)\left(2k \delta_k+ \delta_k^2\right)}
                            {\big(\left(k +\delta_k \right)^2- \left(j +\delta_j\right)^2\big)}\nonumber\\
                       &- \frac{\big(2j g_j + g_j^2\big)\left(2k g_k+ g_k^2\right)}
                            {\big( \left(k +g_k\right)^2- \left(j + g_j \right)^2\big) }  \bigg|\,\frac 1{(k-j)^2}
                            = \oh(1)                   
\end{align}
as $N,L\to\infty$, $\frac N L\to \frac{\sqrt E}{\pi}$.
\end{lemma}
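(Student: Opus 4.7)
The plan is a pointwise comparison of the two rational expressions in the summand, followed by control of the double sum via the elementary asymptotics \eqref{RR}--\eqref{TT}. The decisive input is Lemma \ref{prod:le1}(iii), which furnishes, uniformly in $n$, the estimate $|\epsilon_n| \lesssim 1/L$ for $\epsilon_n := g_n - \delta_n$. To keep the algebra compact I would abbreviate
\[
u_n := (n+\delta_n)^2 - n^2, \quad v_n := (n+g_n)^2 - n^2, \quad D_\delta := (k+\delta_k)^2 - (j+\delta_j)^2, \quad D_g := (k+g_k)^2 - (j+g_j)^2,
\]
so that the summand reads $|u_j u_k/D_\delta - v_j v_k/D_g|/(k-j)^2$. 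From Definition \ref{delta:def:phase} one has $|\delta_n|, |g_n| \le 1$, giving $|u_n|, |v_n| \lesssim n$; the monotonicity of the phase shift in the wavenumber combined with the ordering of eigenvalues makes $(\delta_k)_k$ and $(g_k)_k$ increasing, so that $(k-j)(k+j) \le D_\delta, D_g \lesssim (k-j)(k+j)$ for $k > j \ge 2$.

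Next, direct expansion yields $u_n - v_n = -\epsilon_n(2n + \delta_n + g_n)$ and $D_\delta - D_g = \epsilon_k(2k + \delta_k + g_k) - \epsilon_j(2j + \delta_j + g_j)$, whence $|u_n - v_n| \lesssim n/L$ and $|D_\delta - D_g| \lesssim (k+j)/L$. Combining $|u_j u_k - v_j v_k| \le |u_j||u_k - v_k| + |v_k||u_j - v_j| \lesssim jk/L$ with the identity
\[
\frac{u_j u_k}{D_\delta} - \frac{v_j v_k}{D_g} = \frac{u_j u_k - v_j v_k}{D_\delta} + v_j v_k\,\frac{D_g - D_\delta}{D_\delta D_g}
\]
produces the pointwise bound
\[
\frac{|A-B|}{(k-j)^2} \lesssim \frac{jk}{L(k-j)^3(k+j)} + \frac{jk}{L(k-j)^4(k+j)}.
\]
Using $jk/(k+j) \le j$ and the tail estimates $\sum_{k > N}(k-j)^{-3} \lesssim (N-j+1)^{-2}$, $\sum_{k > N}(k-j)^{-4} \lesssim (N-j+1)^{-3}$, the double sum reduces to $\frac{1}{L}\sum_{j=2}^N j/(N-j+1)^{2} + \frac{1}{L}\sum_{j=2}^N j/(N-j+1)^{3}$, both of order $N/L$ by a change of variables $m = N-j+1$.

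Since $N/L \to \sqrt E/\pi > 0$, this crude estimate only delivers $\Oh(1)$, and the chief obstacle is sharpening it to the claimed $\oh(1)$. The loss is concentrated at the corner $j \approx N$, $k \approx N+1$, where $(k-j)^{-2}$ saturates. To bridge the gap I would invoke the finer expansion
\[
\epsilon_n = \frac{\delta'(\sqrt{\lambda_n})\delta(\sqrt{\lambda_n})}{\pi L} + \oh(1/L)
\]
from Lemma \ref{prod:le1}(iii) with uniform remainder, and exploit that the principal part $c(n/L) := \delta'(\pi n/L)\delta(\pi n/L)/\pi$ is Lipschitz in $n/L$ on bounded intervals. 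On the bulk range $j, k \lesssim L$ this Lipschitz property converts the crude bound $|D_\delta - D_g| \lesssim (k+j)/L$ into $|D_\delta - D_g| \lesssim (k-j)/L + \oh(1/L)(k+j)$. Feeding this improvement back into the decomposition supplies the extra factor of $(k-j)/(k+j)$ that turns $\Oh(N/L)$ into $\oh(1)$, while the residual $\oh(1/L)$-piece is absorbed by dominated convergence against the summable kernel $1/(k-j)^3$. The tail range $k \gg L$, where the principal part $c(k/L)$ is small by the decay of $\delta'$, is handled by the crude bound, which then collapses thanks to the extra decay of $c$.
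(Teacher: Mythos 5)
Your setup---the uniform bound $|g_n-\delta_n|\le c/L$ extracted from Lemma~\ref{prod:le1}\,(iii), and the splitting of the difference into a numerator part $(u_ju_k-v_jv_k)/D_\delta$ and a denominator part $v_jv_k(D_g-D_\delta)/(D_\delta D_g)$---is exactly the decomposition used in the paper, and your ``crude'' estimates are correct as far as they go. The problem is your final step. The $\Oh(1)$ loss you locate at the corner $j\approx N$, $k\approx N+1$ is real and cannot be removed: for $j=N$, $k=N+1$ one has, to leading order, $u_ju_k/D_\delta\approx 2N\delta_N^2$ and $v_jv_k/D_g\approx 2Ng_N^2$, so the summand (with weight $(k-j)^{-2}=1$) behaves like $2N(\delta_N-g_N)(\delta_N+g_N)\approx -4N\,\delta_\alpha'(\sqrt{\lambda_N})\delta_\alpha(\sqrt{\lambda_N})\,(\delta_N+g_N)/(\pi L)$, which converges to a nonzero multiple of $\sqrt E\,\delta_\alpha'(\sqrt E)\,\delta_\alpha^2(\sqrt E)$ for generic $\alpha$. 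Your proposed sharpening does not touch this term: the Lipschitz refinement of $\epsilon_n$ is fed into $D_\delta-D_g$, i.e.\ into the denominator part, whereas the obstruction sits in the numerator part, whose size is governed by the \emph{leading} term $\delta_\alpha'\delta_\alpha/(\pi L)$ of $\epsilon_n$ itself; no control of the $\oh(1/L)$ remainder can make a nonvanishing leading term cancel. So the statement with the weight $1/(k-j)^2$ is, as your own computation essentially reveals, false in general.

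The resolution is that the weight should read $1/(k^2-j^2)$: this is what is actually needed in the passage from \eqref{prod:eq3} to \eqref{prod:eq4}, and it is the weight the paper's own proof carries (every displayed denominator there contains the factor $k^2-j^2$). With this weight your crude bounds already finish the proof, exactly as in the paper: the numerator part is
\begin{equation*}
\lesssim \frac1L\sum_{j=2}^N\sum_{k=N+1}^\infty\frac{jk}{(k-j)^2(k+j)^2}
\lesssim \frac1L\sum_{j=2}^N\sum_{k=N+1}^\infty\frac1{(k-j)^2}=\Oh\Big(\frac{\ln N}L\Big)
\end{equation*}
by \eqref{RR}, and the denominator part is $\lesssim \frac1L\sum_j\sum_k (k-j)^{-3}=\Oh(1/L)$ by \eqref{TT}; both are $\oh(1)$. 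In short: keep your decomposition and pointwise bounds, replace $1/(k-j)^2$ by $1/(k^2-j^2)$, and delete the sharpening step.
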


\begin{proof}
 First, using the expansion of Lemma \ref{prod:le1}, we obtain for all $n\in\N$, $n>1$,
 \begin{equation}\label{prod:lemma:eq1}
  |g_n-\delta_n|\le \frac 1 \pi \big|\delta(\sqrt{\mu_n}- \delta(\sqrt{\lambda_n})\big|\le 
  \frac{\norm{\delta}_\infty\norm{\delta'}_\infty  } {\pi L} := \frac{c} L,
 \end{equation}
  where the constant $c>0$ depends only on $\alpha$. 
  We prove the assertion in two steps. In the first step we consider the numerator only in the second step we consider the
  denominator. 
  Using \eqref{prod:lemma:eq1} we estimate
 \begin{align}
   &\sum_{j=2}^N\sum_{k=N+1}^\infty \bigg|\frac{\big(2j \delta_j+ \delta_j^2\big)\left(2k \delta_k+ \delta_k^2\right)-\big(2j g_j + g_j^2\big)\left(2k g_k+ g_k^2\right)}
                            {\big( \left(k +g_k\right)^2- \left(j + g_j \right)^2\big) \left(k^2-j^2\right)}
                         \bigg|\nonumber\\                  
                   \le &
                   \frac C L 
    \sum_{j=2}^N\sum_{k=N+1}^\infty \frac{(j+1)(k+1)}
    {\big( \left(k +g_k\right)^2- \left(j + g_j \right)^2\big) \left(k^2-j^2\right)}\nonumber\\
    \le &	
      \frac C L 
    \sum_{j=2}^N\sum_{k=N+1}^\infty \frac{(j+1)(k+1)}
    {\left(k+j-2\right)(k+j) \left(k-j\right)^2}
     =
     \Oh\Big(\frac{\ln N} L\Big) \label{prod:eq7}
 \end{align}
 as $N,L\to \infty$, $\frac N L\to\frac{\sqrt E}\pi$, where we used $|g_j+g_k|\le 2$, $g_k-g_j>0$ 
 for $j<k$ and
 \eqref{RR}.
 In order to estimate the denominator 
  we use  \eqref{prod:lemma:eq1} to obtain some constant $c>0$ independent of $j,k$ such that
 \begin{equation}
   \left| \left( \left(k +g_k\right)^2- \left(j + g_j \right)^2\right)  
   - \left( \left(k +\delta_k\right)^2- \left(j + \delta_j \right)^2\right) \right|\le c\frac{k+j} L. 
 \end{equation}
 Thus,
 \begin{align}
   &\sum_{j=2}^N\sum_{k=N+1}^\infty \left(2j \delta_j+ \delta_j^2\right)\left(2k \delta_k+ \delta_k^2\right)
                    \bigg|   \frac 1 {\big( \left(k +g_k\right)^2- \left(j + g_j \right)^2\big) \big(k^2-j^2\big)}\nonumber
                         \\   
                       &\qquad\qquad -\frac 1 {\big( \left(k +\delta_k\right)^2- \left(j + \delta_j \right)^2\big) \big(k^2-j^2\big)}\bigg|\nonumber\\
            \le &  \frac {4c} L\sum_{j=2}^N\sum_{k=N+1}^\infty \
		    \frac{jk(k+j)}
		    {\big(k^2-j^2\big)^2\big((k+g_k)^2 - (j+g_j)^2\big)\big(( k+\delta_k)^2- (j + \delta_j)^2\big)}\nonumber\\
	\le &
	\frac {4c} L\sum_{j=2}^N\sum_{k=N+1}^\infty 
	\frac {jk}{\big(k-j\big)^4\big(k+j-2\big)^2\big(k+j\big)} = \oh(1)\label{prod:eq8}
 \end{align} %\end{align}
as $N,L\to\infty$ $N/L\to \frac{\sqrt E}\pi$, where we used $|g_k+g_j|\le 2$, $|\delta_k+\delta_j|\le 2$,
$g_k-g_j>0$ and $\delta_k-\delta_j>0$ for $j<k$. 
\end{proof}

\begin{lemma}\label{prod:lemma:cut}
The estimate
 \begin{align}
  \bigg|\sum_{j=2}^N\sum_{k=N+1}^\infty \frac{\big(2j \delta_j + \delta_j^2\big)\left(2k \delta_k+ \delta_k^2\right)}
                            {\big( \left(k  + \delta_k\right)^2- \left(j  +\delta_j \right)^2\big) \big(k^2-j^2\big)}                            
        -  \sum_{j=2}^N\sum_{k=N+1}^{2N} \frac {4jk\delta_j\delta_k}{(k^2-j^2)^2}\bigg|%\nonumber\\ 
       % =&
     =\,   \Oh(1)\label{100}
 \end{align}
 holds 
 as $N,L\to\infty$, $\frac N L\to \frac{\sqrt E}{\pi}$.
\end{lemma}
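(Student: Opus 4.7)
The plan is to split the discrepancy into a tail contribution coming from $k>2N$ (which is only present in the first sum) and a main contribution coming from the range $k\in[N+1,2N]$ (which is a termwise comparison). Throughout, write $A:=(2j\delta_j+\delta_j^2)(2k\delta_k+\delta_k^2)$, $A':=4jk\delta_j\delta_k$, $B:=(k+\delta_k)^2-(j+\delta_j)^2$, $C:=k^2-j^2$. Using $|\delta_n|\le 1$ and the monotonicity $\delta_k-\delta_j\ge 0$ from \eqref{aux:pf}, one gets the uniform lower bound $B\ge (k-j)(k+j-2)\ge\tfrac{1}{2}(k-j)(k+j)$ (valid for $j\ge 2$, $k\ge N+1$), and the upper bounds $|A|\le c\,jk$, $|A-A'|\le c(j+k)$, $|B-C|\le c(j+k)$ with a constant $c$ independent of $j,k,L$.

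For the tail $k\ge 2N+1$ one has $k-j\ge k/2$ (since $j\le N\le k/2$), hence $BC\ge c'k^4$; the corresponding summand is bounded by $c''\,j/k^3$, and summing over $k\ge 2N+1$ for fixed $j$ gives $\Oh(j/N^2)$, which in turn sums to $\Oh(1)$ over $j\in[2,N]$. For the main range $k\in[N+1,2N]$ one uses the algebraic identity
\begin{equation}
   \frac{A}{BC}-\frac{A'}{C^2}=\frac{A(C-B)}{BC^2}+\frac{A-A'}{C^2}.
\end{equation}
For the first piece the bounds above yield
$\frac{|A||C-B|}{BC^2}\le c\,\frac{jk(j+k)}{(k-j)^3(k+j)^3}\le c'\,(k-j)^{-3}$, and for fixed $j\in[2,N]$ the sum $\sum_{k=N+1}^{2N}(k-j)^{-3}\le\sum_{l=N+1-j}^\infty l^{-3}=\Oh\bigl((N+1-j)^{-2}\bigr)$, which gives $\Oh(1)$ after summing in $j$. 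For the second piece $\frac{|A-A'|}{C^2}\le c\,\frac{1}{(k-j)^2(k+j)}\le c'\,N^{-1}(k-j)^{-2}$, and a similar double-summation argument produces $\Oh(N^{-1}\ln N)=\oh(1)$.

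The main technical point to be careful about is the telescoping step: at the boundary $j=N$, $k=N+1$ the quantity $k-j$ is as small as $1$, so the key is that for each fixed $j$ the sum $\sum_{k-j\ge 1}(k-j)^{-\beta}$ with $\beta>1$ contributes a rapidly decaying factor in $N+1-j$, which tames the summation over $j$. I expect the only notational nuisance to be the uniform positivity $B\ge \tfrac{1}{2}(k-j)(k+j)$, which however follows directly from $|\delta_n|\le 1$ together with \eqref{aux:pf}, so no genuine obstacle arises.
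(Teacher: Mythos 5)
Your proposal is correct and follows essentially the same route as the paper: split off the tail $k>2N$ (where the summand is $\Oh((k-j)^{-2})$ or better), then on the range $k\in[N+1,2N]$ compare termwise by separately estimating the numerator discrepancy and the denominator discrepancy, using $|\delta_n|\le 1$, the monotonicity $\delta_k-\delta_j\ge 0$, and the elementary double sums \eqref{RR}--\eqref{TT}. The only cosmetic difference is the algebraic grouping $\frac{A(C-B)}{BC^2}+\frac{A-A'}{C^2}$ versus the paper's insertion of $\pm\sum\frac{4jk\delta_j\delta_k}{BC}$, which leads to the same two error terms of order $\Oh(1)$ and $\Oh(N^{-1}\ln N)$.
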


\begin{proof}
 First, we bound the tail, i.e. 
 using
 $\delta_k-\delta_j>0$ for $k>j$ and $|\delta_n|\le 1$ for all $n\in\N$ we estimate
 \begin{align}
  &\sum_{j=2}^N\sum_{k=2N+1}^\infty \frac{\big(2j \delta_j + \delta_j^2\big)\left(2k \delta_k+ \delta_k^2\right)}
                            {\big( \left(k  + \delta_k\right)^2- \left(j  +\delta_j \right)^2\big) \left(k^2- j^2\right)}
 \le  \sum_{j=2}^N\sum_{k=2N+1}^\infty \frac {1}{(k-j)^2}\nonumber\\
 \le &  \sum_{k=2N+1}^\infty \frac N {(k-N)^2}=\Oh(1),\label{lemma:prod:eq1}
 \end{align}
 as $N\to\infty$.
 We insert 
 $\pm \displaystyle\sum_{j=2}^N\sum_{k=N+1}^{2N} \frac{4jk\delta_j\delta_k}{\big((k+\delta_k)^2-(j+\delta_j)^2\big)\big(k^2-j^2\big)}$ 
 in \eqref{100}.
Thus, in the next step $\delta_k-\delta_j>0$ yields %using again $\frac 1 \pi \norm{\delta}_\infty\leq \frac 1 2$
  \begin{align}
  & \sum_{j=2}^N\sum_{k=N+1}^{2N}
  \bigg| \frac{\big(2j \delta_j + \delta_j^2\big)\left(2k \delta_k+ \delta_k^2\right)- 4jk\delta_j\delta_k}
               {\big( \left(k  + \delta_k\right)^2- \left(j  +\delta_j \right)^2\big) \big(k^2-j^2\big)}\bigg|\nonumber\\
  \le& \sum_{j=2}^N\sum_{k=N+1}^{2N}\bigg|                                             
  \frac{2(k+j)+1}
  {\big(k-j\big)^2\big(k+j\big)\big(k+j-2)\big)}\bigg|\nonumber\\
  \le& 3 \, \sum_{j=2}^N\sum_{k=N+1}^{2N}\bigg| \frac{1}{\big(k-j\big)^2\big(k+j-2\big)}\bigg|=\Oh\Big(\frac{\ln N} N\Big),
  \end{align}
  as $N\to\infty$, where we used \eqref{RR} in the last line. 
In the third step, again $|\delta_n|\le 1$ for $n\in\N$ yields
  \begin{align}
   &\sum_{j=2}^N\sum_{k=N+1}^{2N} \frac{4jk}{(k^2-j^2)} \, \bigg|\frac 1 {\big( \left(k  + \delta_k\right)^2- 
				  \left(j  +\delta_j \right)^2\big) }
								-\frac 1 {\big(k^2-j^2\big)}\bigg|\nonumber\\
   \le & \sum_{j=2}^N\sum_{k=N+1}^{2N} \frac{9jk\left(k+j\right)}
		    {\big(k^2-j^2\big)^2 \big(k+j-2\big)\big(k-j\big)}\nonumber\\
   \le & 9\, \sum_{j=2}^N\sum_{k=N+1}^{2N} \frac 1 {\big(k-j\big)^3}= \Oh(1),		    
  \end{align}
   as $N\to\infty$, where we used \eqref{TT}.
\end{proof}

\begin{lemma}\label{prod:lemma_2}
 The asymptotics
\begin{align}
 \bigg|\sum_{j=2}^N\sum_{k=N+1}^{2N} \frac{4jk\delta_j\delta_k}{\big(k^2-j^2\big)^2}-
 \frac 1 {\pi^2}\int_{0}^{\frac N L}\d x\,\int_{\frac{N+1} L}^{\frac{2N} L}\d y \,
 \frac{4xy\delta(x\pi)\delta(y\pi)}{\big(y^2-x^2\big)^2}\bigg|=\Oh(1)
\end{align}
holds  
 as $N,L\to\infty$, $\frac N L\to \frac{\sqrt E}{\pi}$.
\end{lemma}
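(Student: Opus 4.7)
The plan is to recognize the double sum as a two-dimensional Riemann sum, with mesh $1/L$, for the double integral, and to control the approximation error uniformly. Writing $\delta_n = -\delta_\alpha(n\pi/L)/\pi$ and setting
\[
F(x,y) := \frac{4xy\,\delta_\alpha(\pi x)\delta_\alpha(\pi y)}{\pi^2(y^2-x^2)^2},
\]
the summand equals $L^{-2}F(j/L, k/L)$, and the claim reduces to
\[
\left|\sum_{j=2}^N\sum_{k=N+1}^{2N} L^{-2}\, F(j/L, k/L) \,-\, \int_{0}^{N/L} dx\int_{(N+1)/L}^{2N/L} dy\, F(x,y)\right| = \Oh(1).
\]

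To each $(j,k)$ in the index range I associate the box $B_{j,k} := [(j-1)/L, j/L] \times [k/L, (k+1)/L]$, whose disjoint union is the rectangle $[1/L, N/L] \times [(N+1)/L, (2N+1)/L]$. The crucial observation is that on every $B_{j,k}$ one has $y - x \ge (k-j)/L$, so $F$ is smooth there. By the explicit form in Definition \ref{delta:def:phase}, both $\delta_\alpha$ and $\delta'_\alpha$ are bounded on $[0,\infty)$; a direct computation (the dominant singularity arising from differentiating $(y^2-x^2)^{-2}$), combined with the elementary bound $xy/(y+x)^2 \le 1/4$, then yields
\[
\sup_{B_{j,k}} \|\nabla F\|_\infty \le \frac{C\, L^3}{(k-j)^3}
\]
for a constant $C=C(\alpha,E)$. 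Together with the mean value theorem and the area $L^{-2}$ of each box, this gives the per-box discrepancy $\le 2CL^{-3}\cdot(L/(k-j))^3 = 2C/(k-j)^3$. Grouping by $m = k-j$ with at most $\min(m,N-1)$ pairs for each $m \ge 1$, the total Riemann sum error is bounded by $\sum_{m\ge 1} \min(m,N-1)/m^3 = \Oh(1)$.

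It remains to absorb the boundary mismatch between the union of boxes and the integration region. The missing strip $[0,1/L] \times [(N+1)/L, 2N/L]$ contributes $\Oh(L^{-2})$ to the integral because the factor $x$ combined with the fact that $y-x$ is bounded below by $N/L-1/L$ forces $|F| \le Cx$ there; the extra strip $[1/L, N/L] \times [2N/L, (2N+1)/L]$ contributes $\Oh(L^{-1})$ since there $y - x \ge N/L$ is bounded below and $|F| = \Oh(1)$. Both contributions are $\oh(1)$, completing the proof. The main technical point, and only real obstacle, is that the choice of boxes keeps $y-x$ bounded below by $(k-j)/L$, so the diagonal singularity of $F$ never enters the per-box derivative bound, and the resulting sum $\sum 1/(k-j)^3$ closes the argument uniformly in $N$.
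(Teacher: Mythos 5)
Your argument is correct and follows essentially the same route as the paper: the same boxes $[(j-1)/L,j/L]\times[k/L,(k+1)/L]$, the same mean-value-theorem estimate with a per-box gradient bound of order $L^{3}/(k-j)^{3}$, and the same conclusion via the summable series $\sum (k-j)^{-3}=\Oh(1)$. The only (welcome) addition is that you explicitly bound the two boundary strips where the union of boxes and the integration rectangle disagree, a step the paper's proof passes over silently.
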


\begin{proof}
 We recall that $\delta_k:=-\frac 1 \pi \delta(\sqrt{\lambda_k})$
 and we rewrite
 \begin{equation}
  \sum_{j=2}^N\sum_{k=N+1}^{2N} \frac{4jk\delta_j\delta_k}{\big(k^2-j^2\big)^2} 
  = 
  \frac 1 {L^2\pi^2} \sum_{j=2}^N\sum_{k=N+1}^{2N} 
  \frac{4\frac j L\frac k L\delta\big(\frac{j\pi} L\big)\delta\left(\frac{k\pi} L\right)}
  {\big(\big(\frac k L\big)^2- \big(\frac j L\big)^2\big)^2}.
 \end{equation}
Thus, we estimate
\begin{align}
&\bigg|\frac 1 {L^2} \sum_{j=2}^N\sum_{k=N+1}^{2N} 
\frac{\frac j L\frac k L\delta\big(\frac{j\pi} L\big)\delta\left(\frac{k\pi} L\right)}
		  {\left(\left(\frac k L\right)^2- \big(\frac j L\big)^2\right)^2} 
	 -\int_{\frac 1 L}^{\frac N L}\d x\,\int_{\frac{N+1} L}^{\frac{2N+1} L}\d y \,
	 \frac{xy\delta(x\pi)\delta(y\pi)}{\big(y^2-x^2\big)^2}\bigg|\nonumber\\
\le &
\sum_{j=2}^N\sum_{k=N+1}^{2N}  \int_{\frac{j-1} L}^{\frac j L}\d x\,\int_{\frac k L}^{\frac{k+1} L}\d y\, 
\Big|\, f \Big(\frac j L,\frac k L\Big)-f(x,y) \, \Big|\label{prod:lemma:eq2},
\end{align}
where
\begin{equation}
 f(x,y):=\frac {xy\delta(x\pi)\delta(y\pi)}{\big(y^2-x^2\big)^2}.
\end{equation}
Using the mean-value theorem and the Cauchy-Schwarz inequality we obtain
\begin{align}
 \eqref{prod:lemma:eq2}
 \le & 
 \sum_{j=2}^N\sum_{k=N+1}^{2N} 
 \sup_{(x,y)\in (\frac{j-1}{L},\frac j L)\times(\frac k L, \frac{k+1} L)}\bignorm{(\nabla f)(x,y)}_2\nonumber\\
&
 \qquad\qquad\qquad\times 
 \int_{\frac{j-1} L}^{\frac j L}\d x\,\int_{\frac k L}^{\frac{k+1} L}\d y \,
 \Bignorm{\Big(\frac j L -x,\frac k L-y\Big)}_2\nonumber\\
\le &
\frac 1 {L^3}\, \sum_{j=2}^N\sum_{k=N+1}^{2N} 
\sup_{(x,y)\in (\frac{j-1}{L},\frac j L)\times(\frac k L, \frac{k+1} L)}\bignorm{(\nabla f)(x,y)}_2,\label{prod:lemma_eq4}
\end{align}
where $|\cdot|_2$ denotes the Euclidean norm. We compute
\begin{align}
 &(\nabla f)(x,y)= \frac 1 {\big(y^2-x^2\big)^3}\\
		&\times \begin{pmatrix}
			  (y^2-x^2)(y\delta(x\pi)\delta(y\pi)+xy\delta'(x\pi)\delta(y\pi)\pi)+ 4x^2y\delta(x\pi)\delta(y\pi)\\
			  (y^2-x^2)(x\delta(x\pi)\delta(y\pi)+xy\delta(x\pi)\delta'(y\pi)\pi)-4 x y^2\delta(x\pi)\delta(y\pi)
		      \end{pmatrix}\nonumber\\
&=: \frac 1 {\big(y^2-x^2\big)^3}\,  g(x,y).		      
\end{align}
We estimate for $(x,y)\in \big(\frac{j-1} L, \frac j L\big)\times \left(\frac k L, \frac{k+1} L\right)$, $j\le N <k$,
\begin{equation}
\Big(\frac 1 {y^2-x^2}\Big)^3 \le \frac{L^6}{\big(k+j-1\big)^3\big(k-j\big)^3}\le \frac{L^6}{N^3}
\frac 1 {\big(k-j\big)^3}\label{101}
\end{equation}
and, using $\delta,\delta'\in L^\infty((0,\infty))$,
\begin{equation}
 \sup_{(x,y)\in (\frac{j-1}{L},\frac j L)\times(\frac k L, \frac{k+1} L)}\bignorm{g(x,y)}_2
 \le
 \sup_{(x,y)\in (0,\frac {2N+1} L)\times (0,\frac {2N+1} L)} \bignorm{g(x,y)}_2 =\Oh(1)\label{102}
\end{equation}
as $N,L\to\infty$, $\frac N L \to \frac{\sqrt E}{\pi}$.
Thus, \eqref{101} and \eqref{102} imply 
\begin{align}
 \eqref{prod:lemma_eq4}\le \Oh\bigg( \sum_{j=2}^N\sum_{k=N+1}^{2N} \frac 1 {(y-x)^3} \bigg)= \Oh(1)
\end{align}
as $N,L\to\infty$, $\frac N L\to \frac{\sqrt E}{\pi}$.
%by Lemma \ref{prod:lemma1} (iii). 
\end{proof}

\end{appendix}

\section*{Acknowledgement}
The author thanks Heinrich K\"uttler, 
Peter M\"uller, Peter Otte, Wolfgang Spitzer, and especially Alessandro Michelangeli for
fruitful discussions.

\newcommand{\etalchar}[1]{$^{#1}$}
\newcommand{\noopsort}[1]{}


\begin{thebibliography}{HSBvD05}
\providecommand{\url}[1]{{\tt #1}}
\providecommand{\urlprefix}{URL }
\providecommand{\eprint}[2][]{e-print {#2}}



\bibitem[Aff97]{Affleck199735}
I.~Affleck, 
Boundary condition changing operations in conformal field theory and condensed matter physics,
{\em Nuc. Phys. B\/} {\bf 58}, 35--41 (1997).


\bibitem[AGHH05]{zbMATH02132167}
S. Albeverio, F. Gesztesy, R. H\o egh-Krohn, and H. Holden,
Solvable models in quantum mechanics, 2nd ed, American Mathematical Society,
 Providence, RI, 2005.


\bibitem[And67]{PhysRev.164.352}
P.W. Anderson\noopsort{z}, Ground state of a magnetic impurity in a metal,
  {\em Phys. Rev.\/} {\bf 164}, 352--359 (1967).


\bibitem[FP14]{FrankPush}
R.L. Frank and A. Pushnitski,
The spectral density of a product of spectral projections, {\em J. Funct. Anal.} {\bf 268}, 3867--3894 (2015).

  

\bibitem[Geb15]{MG:delta}
M.~Gebert,
Finite-size energy of non-interacting Fermi gases, arXiv:1406.3739 (2014).

  
\bibitem[GKM14]{GKM}
M. Gebert, H. K{\"u}ttler, and P. M{\"u}ller,
Anderson's Orthogonality Catastrophe,
{\em Comm. Math. Phys.\/} {\bf 329}, 979--998 (2014).

\bibitem[GKMO14]{GKM2}
M. Gebert, H. K\"uttler, P. M\"uller, and P. Otte,
The decay exponent in the orthogonality catastrophe in Fermi gases, 
arXiv:1407.2512 (2014). To appear in {\em J. Spect. Theory}.


\bibitem[IZ88]{MR968677}
M.E.H. Ismail and Ruiming Zhang,
 On the {H}ellmann-{F}eynman theorem and the variation of zeros of
  certain special functions,
 {\em Adv. Appl. Math.} {\bf 9}, 439--446 (1988).


\bibitem[{Kno}96]{zbMATH00861508}
K. Knopp,
 {\em Theorie und Anwendung der unendlichen Reihen}, 6th ed.,
 Springer-Verlag, Berlin, 1996.

\bibitem[KOS15]{magnetic}
H.K.~Kn\"orr, P.~Otte, and W.~Spitzer,
Anderson's orthogonality catastrophe in one dimension induced by a magnetic field,
{\em J. Phys. A: Math. Theor.}  {\bf 48}, 325202 (2015).
%arXiv:1502.07507 (2015).

\bibitem[KOS13]{KuOtSp13}
H. K\"uttler, P. Otte and W. Spitzer,
Anderson's orthogonality catastrophe for one-dimensional systems,
{\em Ann. H. Poincar\'e\/} {\bf 15}, 1655--1696 (2014). 

\bibitem[RS78]{MR0493421}
M. Reed and B. Simon,
 {\em Methods of modern mathematical physics. {IV}. {A}nalysis of
  operators},
Academic Press, New York, 1978.

  
\bibitem[Sim05]{MR2154153}
B. Simon,
 {\em Trace ideals and their applications}, 2nd ed., 
 Mathematical Surveys and Monographs, vol. 120,
 American Mathematical Society, Providence, RI, 2005.

\bibitem[TO85]{Tanabe}
Y. Tanabe and K. Ohtaka,
 {\em Orthogonality catastrophe and the x-ray photoemission spectrum}, 
 {\em Phys. Rev. B} {\bf 32}, 2036--2048 (1985).
 
\bibitem[{Wey}13]{zbMATH06257273}
H. Weyl, {\em The classical groups. Their invariants and representations},
Princeton University Press, Princeton, NJ, 1939.


\end{thebibliography}
\end{document}